\documentclass[10pt]{article}
\usepackage{dsfont} 
\usepackage{graphicx}
\usepackage{graphics} 
\usepackage{float}

\usepackage{natbib}
\usepackage{apalike}
\usepackage{url}
\usepackage{amsfonts}
\usepackage{amsmath, amssymb}

\usepackage{here} 
\usepackage{multirow}
\usepackage{booktabs} 
\usepackage{dsfont }
\usepackage{color}
\usepackage{mathrsfs}

\usepackage[all]{xy}
    \usepackage{amsthm}
\newtheorem{lem}{Lemma}

\newtheorem{thm}{Theorem}
\newtheorem{prop}{Proposition}
\newcommand\ci{\perp\!\!\!\perp}

\usepackage{tikz}
\usetikzlibrary{snakes}
\usepackage{fancyhdr}
\usepackage[left=3cm,right=3cm,top=1.5cm,bottom=1.5cm,includeheadfoot]{geometry}

%
\pagestyle{fancy}

\chead{}
\rhead{\thepage}
\lfoot{}
\cfoot{}
\rfoot{}

\title{Dynamic path analysis - A useful tool to investigate mediation processes in clinical survival trials}
\author{Susanne Strohmaier, Kjetil R{{\o}}ysland, Rune Hoff, {{\O}}rnulf Borgan, Terje R  Pedersen,\\ Odd O. Aalen }

\begin{document}
\maketitle

\begin{abstract}
When it comes to clinical survival trials, regulatory restrictions usually require the application of methods that solely utilize baseline covariates and the intention-to-treat principle. Thereby a lot of potentially useful information is lost, as collection of time-to-event data often goes hand in hand with collection of information on biomarkers and other internal time-dependent
covariates. 
However, there are tools to incorporate information from repeated measurements in a useful manner that can help to shed more light on the underlying treatment mechanisms. We consider dynamic path analysis, a model for mediation analysis in the presence of a time-to-event outcome and time-dependent covariates to investigate direct and indirect effects in a study of different lipid lowering treatments in patients with previous myocardial infarctions. 
Further, we address the question whether survival in itself may produce associations between the treatment and the mediator in dynamic path analysis and give an argument that, due to linearity of the assumed additive hazard model, this is not the case.
We further elaborate on our view that, when studying mediation, we are actually dealing with underlying processes rather than single variables measured only once during the study period. This becomes apparent in results from various models applied to the study of lipid lowering treatments as well as our additionally conducted simulation study, where we clearly observe, that discarding information on repeated measurements can lead to potentially erroneous conclusions.

\end{abstract}

\section{Introduction}
Survival and event history analysis has become a central interest in clinical biostatistics, as many clinical trials are designed to investigate the effect of certain treatments on the time until the occurrence of a particular event of interest, such as death or disease progression.  As required by regulatory authorities, typical approaches to assess these effects comprise of Kaplan-Meier plots along with log-rank tests or of employing a Cox regression model including treatment and baseline covariates. Those analysing strategies mainly focus on answering the rather pragmatic question 'Does treatment work?'. \\
Often collection of time-to-event data goes hand in hand with the collection of information on biomarkers and other internal time-dependent covariates, which are hardly ever used in the final analysis. Thereby a lot of useful information is ignored, that could be used to address the more exploratory question 'How does treatment work?'. A key tool to approaching answers to that question is mediation analysis, that allows for a decomposition of the total treatment effect into a direct effect and an indirect effect. Emsley and co authors \cite{emsley} give a historical overview of approaches to mediation analysis and thereby discuss two - at first glance distinct appearing - traditions. On the one hand, the estimation of direct and indirect effects by the method of path analysis together with structural equation modeling \cite{wright_path_analysis, duncan_path_analysis, goldberg_structural_equations, baron_kenny}  mainly motivated by social and behavioural sciences. On the other hand, the 'causal inference approach', mainly developed by statisticians and econometricians focusing on assumptions needed for the identification of direct and indirect effects to draw valid inference \cite{rubin_1974, holland_path_analysis, robins_greenland_identifiability, hafeman, cole_hernan}, also pointing out the problems that would occur when path-tracing rules and estimation techniques from linear models are transferred to non-linear models \cite{kaufman,pearl_mediation_formula}.\\
Besides other obvious connections between these two traditions, one common issue was, that until the last decade there was no straightforward approach on how to handle the estimation of direct and indirect effects for survival outcomes.
In 2006, Fosen and co-authors \cite{fosen,fosen2} proposed the model of dynamic path analysis,  which we will mainly be concerned with in this paper, based on linear regression and the additive hazard model \cite{aalen_additive80, aalen_additive} . The approach was originally motivated by recurrent event modeling and would - following the cartegorisation by Emsley and co-authors - fall into the tradition of path analysis. Lange and Hansen \cite{lange} developed a method from a causal inference perspective and presented a way to obtain natural direct and indirect effects. They make use of the additive hazard model as well, however, their approach is restricted to the setting of a time-fixed, normally distributed mediator.\\
The model of dynamic path analysis enables us to exploit more information from data routinely collected within clinical trials (or observational studies, as illustrated in \cite{roysland, gamborg}) by utilizing repeated measurements of the mediator. The model is based on the idea that we are actually dealing with continuous processes that evolve over time rather than fixed variables, as pointed out in \cite{believe}, and aims on modeling the effects of several covariate processes on the occurrence of the event of interest and the relation between the covariate processes. 
It can be viewed as an extension of classical path analysis \cite{wright_path_analysis} and the concept of directed acyclic graphs (DAGs), an important tool in causal inference, to settings that involve time-to-event outcomes and time-dependent covariates. The extensions are essentially, that a DAG is defined at each event time and forms therefore a stochastic process in itself, that the path coefficients may change over time and that the outcome is the occurrence of an event. However, making use of the additive hazard model for the time-to-event outcome and linear regression for the treatment-mediator relationship, preserves the rules for multiplying coefficients along paths, which would not be meaningful in non-linear models, like for example the Cox model \cite{kaufman,vanderweel_survival}. \\
Martinussen \cite{martinussen} derived the large sample properties of the dynamic path analysis model and he further states that under the additional assumptions that the treatment-outcome relation and the mediator-outcome relation are un-confounded and there are no interactions between the treatment and the mediator, the obtained estimates could be interpreted as truly causal effects. He, however, emphasises the notion that no unique definitions of the concepts of direct, indirect and total effects exist \cite{pearl_direct_indirect_effects, robins_greenland_identifiability,goetgeluk}.\\
Further considerations within the field of causal inference motivated by the model have mainly addressed various scenarios where measured or unmeasured confounders could occur on the pathway between the mediator and the outcome, which could themselves be affected by the exposure. For example, \cite{martinussen2} propose a two-stage estimator for the controlled direct effect of a point exposure on a survival outcome under one particular confounding situation. However, that approach is limited to a time-fixed intermediate variable as well.  \\
In this paper, we will assume that the treatment-outcome as well as the mediator-outcome relationships are unconfounded and that no treatment-mediator interactions are present, and rather address the question whether survival in itself may produce associations between the treatment and the mediator in dynamic path analysis. We give an argument showing that, due to linearity of the additive hazard model, this is not the case, so selection by survival does not produce artificial association. This implies the important property that, in case one finds the treatment mediator effect to be changing over time, this is due to a real change of the effect and not caused by the survival selection mechanism. A further important side-product of our argument is that covariates that affect survival in an additive manner but are not considered as confounders will not turn into a confounders as time passes by.\\
For ease of presentation we present the underlying derivations for a single time point in the main text. However, we provide an appendix that contains a generalisation of that argument, but involves a mathematically more precise and therefore perhaps more complicated appearing notation.\\
To elaborate further on our view, that we are actually dealing with processes, we performed a simulation study to compare the application of dynamic path analysis in the situation where the mediator is only measured at one point in time compared to utilizing several measurement in a setting where we actually assume the mediator to be a time-varying process. We can clearly observe that discarding information tends to obscure potential inference about the underlying mechanisms. 
Furthermore, we have data available that were collected within the IDEAL study project (\textbf{I}ncremental \textbf{D}ecrease in \textbf{E}nd Points Through \textbf{A}ggressive \textbf{L}ipid Lowering, \cite{ideal_main}), a multi-center clinical trial designed to compare the effects of two different lipid lowering strategies on the risk of cardiovascular disease among patients with established coronary heart disease (CHD). Enrolled patients were asked to repeatedly return to the study center at pre-scheduled time points for monitoring their lipid values along with other laboratory measurements. 
The trial was designed based on the mechanistic understanding that statins would lower low density lipoprotein (LDL)-cholesterol levels, what would, in turn, result in a reduced risk of coronary events. However, the trial did not show the expected results regarding the primary outcome. Among other more in-depth considerations, the researchers also investigated the proportion of the treatment effect mediated through different lipid measures \cite{ideal_meta_treatment_effect_explained}, but the applied methods were based on utilizing only a single measurement at a particular time point \cite{ideal_simes_landmarking, freedman_landmarking}. We employ dynamic path analysis to illustrate how the direct and the indirect effects develop over time and also use a more broadly defined outcome, to utilize more of the collected information. \\
The paper is structured as follows. First, we describe the concept of dynamic path analysis and the respective estimation equations, followed by our argument concerning conditioning on survival in Section \ref{survival_collider}. In Section \ref{simulation_study} we present results from our simulation study, followed by the results from application to the IDEAL data in Section \ref{IDEAL_analysis}, and a discussion in Section \ref{discussion}. In an appendix we present the generalisation of the argument explained in Section \ref{survival_collider} to various time points and motivate the causal interpretation of the described direct and indirect effects from Section \ref{dynamic_path}.

\section{Dynamic path analysis}\label{dynamic_path}
As mentioned above the idea behind dynamic path analysis is to define a series of graphical models, that depict the relationship between a time-fixed treatment (more baseline covariates could be added, without destroying the big picture), a covariate process and the occurrence of the event of interest, modelled as the infinitesimal change of a counting process. More specifically, we consider the situation illustrated in Figure \ref{DAG}, where $X_1$ would represent the statin treatment at randomisation in our application, $X_2(t-)$ stands for the time-dependent mediator, e.g. LDL-cholesterol, 
 and $\mathrm{d}N(t)$ essentially refers to a jump in the counting process, which would then be a 'any CHD event'. \\
\begin{figure}[H]
\begin{displaymath}
\xymatrix{
       & X_2(t-)  \ar[dr] ^{\beta_{3,2}(t)dt}&\\
X_1 \ar[ur] ^{\beta(t)_{2,1}} \ar@/_/[rr]_{\beta_{3,1}(t)dt} &              & dN(t)
}
\end{displaymath}
\caption{Illustration of one time-local DAG.}
\label{DAG}
\end{figure}
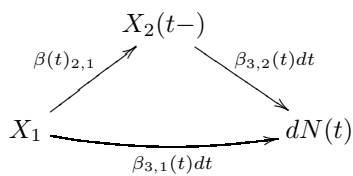 
To perform a dynamic path analysis one has to regress each node in the diagram onto its parents for each event time. For the mediator-exposure relationship this can be done by ordinary linear regression, whereas for the counting process increment we make use of the additive hazard model \cite{aalen_additive80,aalen_additive}. \\
Let us be more specific. Let $N(t) =\{N(t);t \in [0,\tau]\}$ denote a right continuous counting process for one particular individual. Given that we are considering survival data, N(t) will start at 0 and jump to 1 if the individual experiences the event of interest. We want to model the 'jump-intensity' within a small time interval, given all information that has been observed prior to that interval, which we heuristically denote as 'past'. Let $\mathrm{d}N(t)$ denote the increment of the counting process $N(t)$ over a small interval $[t, t + \mathrm{d}t)$, then we can express the intensity process $\alpha(t)$ 
 as 
$$
\alpha(t)\mathrm{d}t=P(\mathrm{d}N(t)=1|\text{ past}),
$$
which we can alternatively express as $E [\mathrm{d}N(t)|\text{ past}]$, since $ \mathrm{d}N(t)$ is a binary variable. 
Applying well-known results from counting process theory \cite{andersen, aalen_book}, we have the following decomposition
$$ 
\mathrm{d}N(t)= \lambda(t)\mathrm{d}t + \mathrm{d}M(t),
$$
where $\mathrm{d}M(t)$ is a martingale increment. Note that the decomposition is of the form 'observation = signal + noise' with $\mathrm{d}N(t)$ interpreted as the observation, $\lambda(t)\mathrm{d}t$ as the signal and $\mathrm{d}M(t)$ as the noise. We further assume independent censoring (\cite{aalen_book}, Chapter 2) and introduce an 'at risk indicator' $Y(t)$ at time $t$.\\
As stated above, for dynamic path analysis we want to make use of the additive hazard model to describe the relationship between the change in the counting process and the covariate processes,
which takes following form 
\begin{align}
\mathrm{d}N(t)= Y(t)\{\beta_{3,0}(t) + \beta_{3,1}(t)X_1 +\beta_{3,2}(t)X_2(t-) + \beta_{3,3}(t)Z_1 + \dots +\beta_{3,p+2}(t)Z_p\}\mathrm{d}t + \mathrm{d}M(t). \label{add_model}
\end{align}
Here the first term on the right-hand side describes the intensity process as the product of the 'at risk indicator', $Y(t)$, and a hazard function, $\alpha(t)$, corresponding to the term in curly brackets. Further, the $\beta_{3,j}(t)$ are arbitrary regression functions, $X_1$ denotes the time-fixed treatment, $X_2(t-)$ the time dependent mediator and the vector $\boldsymbol{Z}$ relevant baseline covariates. 
For the treatment-mediator relationship we assume a linear relationship
\begin{align} 
X_2(t)=b_{2,0}(t) + b_{2,1}(t)X_1+ b_{2,2}(t)Z_1 + \dots +b_{2,p+1}(t) Z_{p} + W_2(t),  \label{mediator}
\end{align}
where in this case $b_{2,j}(t)$ denote the respective regression coefficients at time $t$ and $W_2(t)$ the error term at time $t$.
In applications, of course, different baseline covariates can be incorporated in (\ref{add_model}) and (\ref{mediator}) depending on the setting, but we will use the index $p$ throughout for the sake of simplicity.
Further we also assume
\begin{align*}
E[X_2(t)|X_1,\boldsymbol{Z},T>t]=b_{2,0}(t) + b_{2,1}(t)X_1+ b_{2,2}(t)Z_1 + \dots +  b_{2,p+1}(t) Z_{p},
\end{align*}
what will be further discussed in Section \ref{survival_collider} and the appendix.
For consistency with later sections, we assume the following structure for $X_1$ 
\begin{align}\label{eq_x1}
X_1=b_{1,0} + W_1,
\end{align}
with the error term $W_1$ being independent of the error terms $W_2(t)$ at all times.
Note that the case of a binary treatment, as e.g. in our application section, is also covered in that formulation.
\\
Use of the additive hazards model is a key ingredient in our approach. As described in section 2.1, estimation in the additive hazards model focuses on the cumulative regression functions; see also section 4.2 in the book by Aalen et al. \cite{aalen_book}. For this reason we will (as in \cite{roysland}) define the cumulative indirect and direct effects from our structural equation models in equations (\ref{add_model} - \ref{eq_x1}) as
\begin{align}\label{dir}
\text{dir}(X_1\rightarrow N(t)): 
\int_0^t \beta_{3,1}(s) \mathrm{d}s
\end{align}
and 
\begin{align}\label{indir}
\text{indir}(X_1\rightarrow X_2(t) \rightarrow N(t)): \int_0^t{b_{2,1}(s)\beta_{3,2}(s)}\mathrm{d}s.
\end{align}
It should be noted, however, that our interest is in the local indirect and direct effects. So when interpreting the cumulative effect estimates, we focus on their slopes.
For more detailed elaborations see \cite{fosen}, where they also showed that the total cumulative effect can be obtained by simply taking the sum over the cumulative direct and indirect effects. \\
In our appendix we motivate a causal interpretation of the time local direct and indirect effects, integration then gives the respective cumulative effects. Heuristically speaking, the indirect effect reflects what would happen if we could manipulate the mediator $X_2(t)$ such that it behaved as if $X_1$ had been perturbed, while we let $X_1$ remain unperturbed. This idea resembles the definition of a natural indirect effect, but without making explicit use of the counterfactual framework. The idea underlying the direct effect is an intervention where $X_1$ is perturbed, but the effect that perturbation would have had on $X_2(t)$ can be attenuated. For a more general and mathematically more precise description of these concepts it is referred to Sections B.2 to B.4 in the appendix.
\subsection{Parameter Estimation}
So far we have defined the cumulative direct and indirect effects in equations (\ref{dir}) and (\ref{indir}), now we briefly want to recapitulate the estimation equations according to \cite{roysland}. Let us assume, we have $I$ independent observations of the above defined quantities, with individual values denoted as $N_i(t)$, $Y_i(t)$, $X_{1i}, X_{2i}(t), Z_{1i},\ldots, Z_{pi}$, $M_{i}(t)$ for $i=1, \ldots, I$ at time $t$ . Let further $\mathbf{L}(t)$ be the matrix with the $i$ th row being equal to $Y_i(t)(1,X_{1i}, X_{2i}(t-), Z_{1i},\ldots, Z_{pi}) $  and  $\mathbf{N}(t)$ and $\mathrm{d}\mathbf{M}(t)$ the vector consisting of the respective individual numbers of events $N_i(t)$ and martingale increments $\mathrm{d}M_{i}(t)$, as well as 
$\mathrm{d}\mathbf{B}(t)=(\beta_{3,0}(t)\mathrm{d}t,\ldots,\beta_{3,p+2}(t)\mathrm{d}t)^{T}$
 the vector of regression functions. \\
Then we can write (\ref{add_model}) in matrix notation
\begin{align*}
\mathrm{d}\mathbf{N}(t)=\mathbf{L}(t)\mathrm{d}\mathbf{B}(t) + \mathrm{d}\mathbf{M}(t),
\end{align*}
which suggests that, given that $\mathbf{L}(t)$ has full rank, an estimator for the local effects $\mathrm{d}\mathbf{B}(t)$, denoted as $\mathrm{d}\widehat{\mathbf{B}}(t)$, can be obtained by solving the estimation equations
\begin{align*}
\mathbf{L}(t)^{T}\mathrm{d}\mathbf{N}(t) - \mathbf{L}(t)^{T}\mathbf{L}(t) \mathrm{d}\mathbf{\widehat{B}}(t) = 0 
\end{align*}
for every event time. The estimates $\mathrm{d}\mathbf{\widehat{B}}(t)$ of the local effects may be quite variable. But we obtain stable estimates of the cumulative regression functions $\mathbf{B}(t)$ by aggregating the local effects over all event times up to and including time $t$. \\ 
For the mediator processes $\boldsymbol{X}_2(t)$ we assumed a linear model 
\begin{align*}
\boldsymbol{X}_2(t)=  \tilde{\mathbf{L}}(t)\boldsymbol{b}_2(t) + \boldsymbol{W_2}(t),
\end{align*}
where the $i$-th line of $\tilde{\mathbf{L}}(t)$ equals $Y_i(t)(1,X_{1i}, Z_{1i},\ldots, Z_{pi}) $ and $\boldsymbol{W_2}(t)$ has zero-mean, is independent of $\mathbf{L}(t)$ and its components are uncorrelated.\\
So the estimates $\boldsymbol{\hat{b}_2}(t)$ can be deduced from the standard normal equations
\begin{align*}
\tilde{\mathbf{L}}(t)^{T}\boldsymbol{X_2}(t)-\tilde{\mathbf{L}}(t)^{T}\tilde{\mathbf{L}}(t)\boldsymbol{\hat{b}_2}(t) = 0
 \end{align*}
 at each event time.
Now we can insert the obtained local estimates  $\mathrm{d}\hat{B}_{3,1}(t)$ and $\mathrm{d}\hat{B}_{3,2}(t)$ for $\beta_{3,1} (t)\mathrm{d}t$ and $\beta_{3,2}(t)\mathrm{d}t$ and the respective OLS estimate for $b_{2,1}(t)$ in equations  (\ref{dir}) and (\ref{indir}). Then the estimates of the direct and indirect cumulative effects become sums over the observed event times. For more details see \cite{fosen}.   

\section{Survival collider issue} \label{survival_collider}
In the previos section we have discussed how dynamic path analysis can be viewed as a series of local DAGs. However, when attempting to connect those DAGs as in Figure \ref{coll_2}, one question became apparent and that was whether survival in itself produces associations between covariates. In our case that would imply that the association between the treatment and the mediator could partly be due to survival selection, i.e. an artefact. Just to clarify, consider Figure \ref{coll_2}, where $X_1$ denotes the treatment, $X_2(s_0) $ the mediator value measured at baseline, say $s_0$ and further $ \mathrm{d}N(t_j)$ the jump of the counting process at time $t_j$. One could speculate that when estimating the respective effects at event time $t_2$ one would potentially introduce spurious association between $X_1$ and  $X_2(s_0) $ by conditioning on having survived event time $t_1$, which appears as a collider in the DAG.
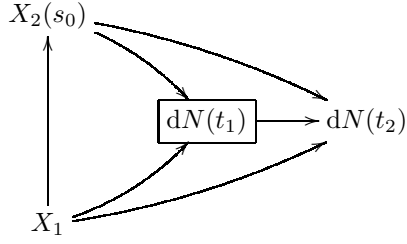
\begin{figure}[H]
\begin{displaymath}
\xymatrix{
 X_2(s_0) \ar@/^/[dr] \ar@/^/[drr] \\
 &*+[F-]{\mathrm{d}N(t_1)} \ar[r] & \mathrm{d}N(t_2)\\
X_1 \ar[uu] \ar@/_/[ur] \ar@/_/[urr]
}
\end{displaymath}
\caption{Simplest scenario where a survival collider effect could be suspected.}
\label{coll_2}
\end{figure}
The following argument indicates, however, that this is not a problem, so in that case conditioning on a collider (survival) does not produce artificial association. \\
We will first look at the case of two covariates that are independent at baseline and show that independence is preserved under survival assuming the additive hazard model. In a second step, we will consider covariates that satisfy a linear structural equation model and directly relate that case to the simple dynamic path analysis model illustrated in Section \ref{dynamic_path}.
\subsection{Independence of covariates is preserved under survival in the
additive hazard model} \label{independent_cov}
Let  $X_1$ and $X_2$ here denote two independent covariates measured at baseline. Under a simple additive hazard model with no interactions, the probability of surviving up to time $t$ is 
\begin{align*}
C(t)\exp{\left(-(\int_0^t{\beta_1(s)}\mathrm{d}s) X_1- (\int_0^t{\beta_2(s)} \mathrm{d}s) X_2\right)},
\end{align*}
where $C(t)$ results from taking the exponential of the integral over the baseline regression function.
 Let $T$ denote the survival time. Then
\begin{align*}
P(X_1&  =x_1, X_2=x_2|T>t)\\
&  =\frac{P(X_1=x_1,X_2=x_2,T>t)}{P(T>t)}\\
&  =\frac{P(T>t|X_1=x_1,X_2=x_2)f_{X_1}(x_1)f_{X_2}(x_2)}{P(T>t)}%
\end{align*}
where $f_{X_1}(x_1)$ and $f_{X_2}(x_2)$ are the densities of $X_1$ and $X_2$ at time 0. Inserting the survival probability gives us
\begin{align*}
P(X_1  &  =x_1,X_2=x_2|T>t)\\
&  =\frac{C(t)\exp{\left(-(\int_0^t{\beta_1(s)}\mathrm{d}s) x_1-(\int_0^t{\beta_2(s)} \mathrm{d}s) x_2\right)}f_{X_1}(x_1)f_{X_2}(x_2)}{P(T>t)}.%
\end{align*}
Hence the probability can still be factorized, so $X_1$ and $X_2$ are still independent at time $t$ conditional on survival.\\
The result can immediately be generalised to an additive model with $K$ independent covariates and no interactions.
\subsection{Covariates satisfying a linear structural equation model}
The above statement can be generalized to a very useful result about how the
distribution of variables in a linear structural equation model are changing
under survival selection. Let us define an ordered linear structural equation
model as follows (modification of \cite{pearl_book}, equation (1.41), see also \cite{loh})
\begin{align*}
X_{k}= b_{k,0}+\sum_{j<k}b_{k,j}X_{j}+W_{k}.
\end{align*}
This is defined for $k=1,\ldots,n$, and all $W_{k}$ are assumed to be independent. Note that we assume this model to be structural, hence there are no unmeasured confounders.
In matrix form the model can be written as follows with the solution in
$W_{k}$ on the right:
\begin{align*}
\mathbf{X}=\boldsymbol{b_0}+\mathbf{B}\mathbf{X}+\boldsymbol{W},\qquad \mathbf{X}=(\mathbf{I}-\mathbf{B})^{-1}\boldsymbol{ b_0}+(\mathbf{I}-\mathbf{B})^{-1}\boldsymbol{W}.
\end{align*}
Here, $\mathbf{B}$ is a strictly lower triangular matrix.
Assume that the hazard function is a linear combination of the $X$'s, then it
is also a linear combination of the $W$'s, and hence the independence
of the $W$'s\ are preserved conditional on survival up to a given
time $t$ because of Section \ref{independent_cov}.
Let $E_{t}$ denote conditional expectation given survival up to time $t$, that
is, $T>t$, then we have 
\begin{align}
E_{t}[X_{k}-b_{k,0}-\sum_{j<k}b_{k,j}X_{j}|X_{j},j<k]=E_{t}[X_k|X_j,j<k]-b_{k,0}-\sum_{j<k}{b_{k,l}X_j}\label{1line}
\end{align}
Since the independence of the $W$'s\ are preserved under survival,
the left hand side of the above equation can be written in the following way:
\begin{align*}
E_{t}[X_{k}-b_{k,0}-\sum_{j<k}b_{k,j}X_{j}\,|\;X_{j},j  & <k]=E_{t}[W
_{k}|W_{j},j<k]=E_{t}[W_{k}]\\
& =E_{t}[X_{k}-b_{k,0}-\sum_{j<k}b_{k,j}X_{j}]=E_{t}[X_{k}]-b_{k,0}-\sum_{j<k}{
b_{k,j}E_{t}[X_{j}]}.
\end{align*}
Comparing with (\ref{1line}), we have:
\begin{align*}
E_{t}[X_{k}|X_{j},j<k]=\sum_{j<k}{b_{k,j}X_{j}}+E_{t}
[X_{k}]-\sum_{j<k}{b_{k,j}E_{t}[X_{j}]}.
\end{align*}
Hence, given survival at time $t$ we have a linear model with the same coefficients that we had at time 0. Hence there is a stable linear relationship between the variables under survival selection. The constant term changes, but that does not matter.
\subsection{Relation to a simple model for dynamic path analysis}
Let us explicitly point out how the results above relate to dynamic path analysis. Assume that $X_1$ is treatment and $X_2(s_0)$ is the mediator, both measured at time
zero. Assume that these covariates constitute a simple structural equation
model, that is 
\begin{align*}
X_1=b_{1,0}+W_{1},\quad X_2(s_0)=b_{2,0} (s_0)+ b_{2,1}(s_0) X_1+W_{2}(s_0).%
\end{align*}
There are no unmeasured confounders either for the relationship between $X_1$ and
$X_2(s_0)$ nor for their influence on survival which is determined by the additive
hazard model:
\begin{align*}
\alpha(t)= \beta_{3,0}(t)+\beta_{3,1}(t)X_1+\beta_{3,2}(t)X_2(s_0).
\end{align*}
This is assumed to be a structural model. Given survival at any given time,
say $T>t$, the relationship between $X_1$ and $ X_2(s_0)$ is still determined by the
coefficient $b_{2,1}(s_0) $. Hence, a simple calculation of local direct and indirect effects of
treatment $X_1$ on hazard would be:%
\begin{align*}
\text{\textrm{Direct effect: \ \ \ \ }}\beta_{3,1}(t)
\end{align*}
\begin{align*}
\text{\textrm{Indirect effect: \ \ \ \ }} b_{2,1}(s_0)\beta_{3,2}(t)
\end{align*}
Note, that the same argument would locally apply, once we obtain an updated measurement of the mediator, say $X_2(s_1) $, and we would consider estimation of the effects around the path plotted as dashed lines in Figure \ref{col_simple}. It could occur that the estimate for the treatment-mediator relationship  $b_{2,1}(s_1)$, differs from $b_{2,1}(s_0)$. This difference, however, will be due to an actual change in the effect that can be observed between the treatment and the updated measurement of the mediator but not due to any selection bias. Referring once more to our real data application, that would for example mean, that the effect of treatment on LDL-cholesterol levels indeed attenuates after some time. \\
\begin{figure}[H]
\begin{displaymath}
\xymatrix{
 X_2(s_0) \ar@/^/[dr]    
  &   &  X_2(s_1)  \ar@/^/[dr] \ar@{-->}@/^/[drr]\\\
 &*+[F-]{\mathrm{d}N(t_1)} \ar[rr] &  &*+[F-]{\mathrm{d}N(t_2)}  \ar[r]  &\mathrm{d}N(t_3)\\
X_1 \ar[uu] \ar@/_/[ur]\ar@{-->}@/_8mm/[uurr] \ar@/_/[urrr] \ar@{-->}@/_/[urrrr]
}
\end{displaymath}
\caption{Local path of interest for the estimation of direct and indirect effects, when an updated measurement became available.}
\label{col_simple}
\end{figure}
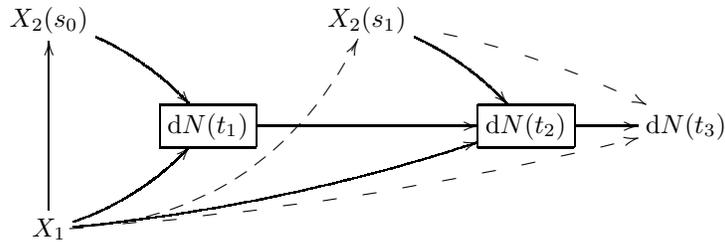
\subsection{Collapsibility in the additive model}
An important question is whether a factor that influences survival, but is un-related to other covariates, can become a confounder over time. Let us assume there is an unmeasured quantity, that is not considered a confounder at time 0 since it is only associated with survival and not with measured covariates and there are no interactions present. It can be a part of the structural equation model above, but with all the $b$'s equal to zero. It follows from the previous result that the $b$'s will stay at zero conditioned on survival. Hence, unmeasured confounders do not arise due to survival collider effects. If the Cox model is valid instead, this will not be true. In fact, any conceivable factor influencing survival will become a confounder when time passes. The underlying reason for or the different behavior of the two models arises from the collapsibility of the additive model (as discussed in \cite{collaps}), which does not hold true for the Cox model.

\section{Simulation study}\label{simulation_study}
To further motivate investigations considering the mediator to be a time-varying process rather than a time-fixed covariate, we performed a simulation study.
We compared dynamic path analysis using a single mediator measurement to dynamic path analysis using several measurements and also set those results in relation to the true data generating model.
In the subsequent section we describe our data generating strategy for the survival outcome and mediator process and then show results from various simulation scenarios. All simulations and analyses were performed in R, version 3.0.1 \cite{R_prog}
\subsection{Data generation}
To generate data respecting the dynamic path analysis setting as presented in Section \ref{dynamic_path} with structural equations as 
\begin{align*}
X_1 &= b_{1,0}+W_1\\
X_2(s) &= b_{2,0}(s) + b_{2,1}(s)X_1 + W_2(s)\\
\alpha(s) &= \beta_{3,0}(s) + \beta_{3,1}(s) X_1 + \beta_{3,2}(s) X_2(s), 
\end{align*}
 we proceeded the following way:\\
Let us first consider the simulation of the time-to-event data. Generally, we assumed a study period of 5 years. This time period was discretised into equidistant grid points, where the distance, $\Delta$, was set to $1$ week. 
For each interval of size $\Delta$ the probability of an event within that interval can be expressed as
\begin{eqnarray*}
P(t_h< T \leq t_h + \Delta|T>t_h)=1-\exp{\left(-\int_{t_h}^{t_h + \Delta}{\alpha(s)\mathrm{d}s}\right)}.
\end{eqnarray*}
Making use of the Taylor expansion this probability can be approximated by 
$\alpha(t_h) \Delta$
and we can, in turn, apply the additive form of the hazard $\alpha(t_h)$ including a time fixed or a time-dependent mediating variable. Note beforehand, that the choice of values in the subsequent paragraph might seem rather arbitrary at first glance, but the underlying idea was to create a situation that appears quite similar to the real data example presented in the next section.\\
To model the flexible regression functions $\beta_{3,j}(s)$, we applied cubic splines and specified the function to take the values $(0.04, 0.03, 0.02, 0.02)$ at years $(0,1,3,5)$ for the effect of the mediating variable on the hazard and to take the values $(-0.3, -0.1, -0.6, -0.05, -0.05, -0.05)$ at years  $(0, 0.2, 0.8, 1.1, 3.5, 5)$ for the effect of treatment. 
To model a $50:50$ randomisation, the treatment variable was simply sampled from a Bernoulli distribution. For the mediator process we first generated a baseline value, sampled from a normal distribution with mean 11 and standard deviation 1.5. Again, we assumed a time-varying effect of treatment on the mediator and modeled the time-dependent coefficient with a cubic spline function (specified at times $(0,1,2,3,4,5)$ to take values $(-0.1,-3,-2.2,-3.3,-2.9,-2.9)$)  and added the time-varying treatment component to the sampled baseline values of the mediator and additionally a noise term generated from a multivariate normal distribution with mean vector equal to zero and a covariance matrix with diagonal elements set to $0.05$ and non-diagonal number set to $0.0$ .\\
After generating a treatment indicator, the mediator process and respective time-varying regression functions $\beta_{3,j}(s)$ at each discrete time point $t_h$, we could insert the respective values into the function for the hazard and calculate event probabilities for each individual at each discrete time point. This enabled us to generate an event history for each individual by sampling from a Bernoulli distribution with the calculated event probability at each time point. If an individual was sampled to have an event at one particular time $t_h$ its event indicator was set to one and the survival time to the discrete time $t_h$ where the event occurred. Censoring times were generated from a uniform distribution to account in average for a $13 \%$ censoring rate. Individuals still alive and not censored at year 5 were censored at that time point due to end of study.
\subsection{Simulation scenarios}
As stated above the main objective was to investigate what one could potentially gain from using more than just one measurement of the mediator, assuming that the mediator is actually a time-varying process. So a natural scenario to start with was, to just take one 'snap-shot' measurement from our generated mediator process and only utilize that measurement in the dynamic path analysis model. Our flexible simulation procedure, however, allowed us to take 'snap-shot' measurements at various time points and to further make use of them in our model for analysis. For the 'snapshot' measurements we simply picked values of the simulated mediator process at pre-specified time points for every individual still in the study at those time points. Again the time points were chosen to mimic the real data application. 
\subsection{Simulation results}
In Figures \ref{sim1} to \ref{sim3} results based on 100 simulations based on trials of sample size 2000 are presented. The scenarios vary in how many 'snap-shot' measurements of the mediator were taken. In all plots the true curve, displaying the assumed parameters for our simulations, is plotted as a dotted blue line, the curve obtained from dynamic path analysis only using one mediator measurement is represented as a red dotted line, whereas the curve from dynamic path analysis using several mediator measurements is displayed as a black solid line.
\begin{figure}[H]
	\centering
  \includegraphics[scale=0.45]{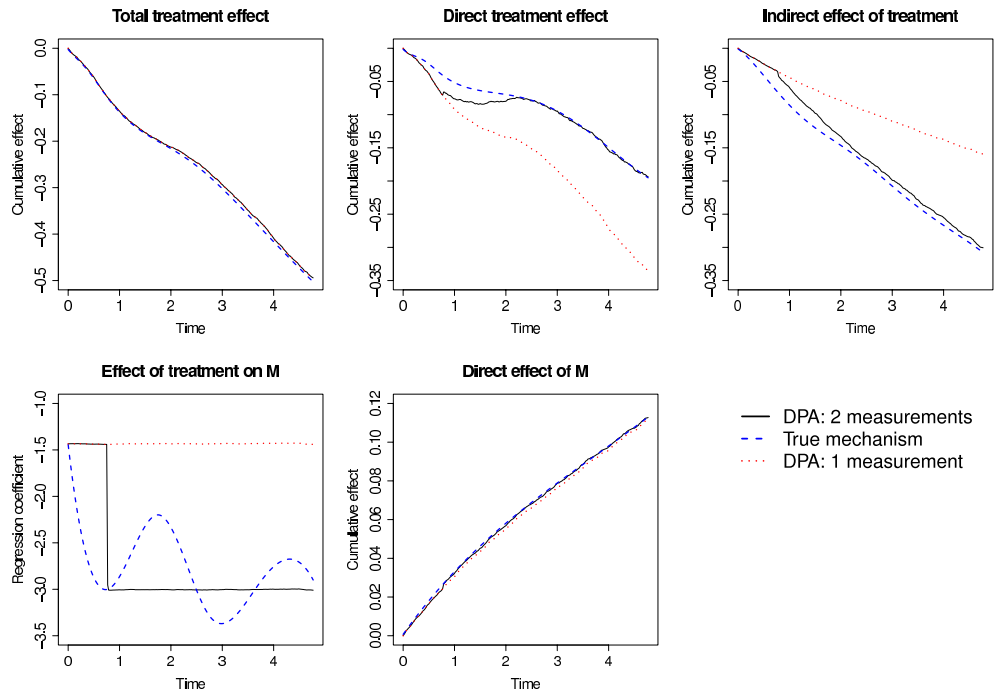}
	\caption{Simulation results, comparing a dynamic path model (DPA) only utilizing a single measurement compared to a model using two measurement and the true model.}
	\label{sim1}
\end{figure}
\begin{figure}[H]
	\centering
  \includegraphics[scale=0.45]{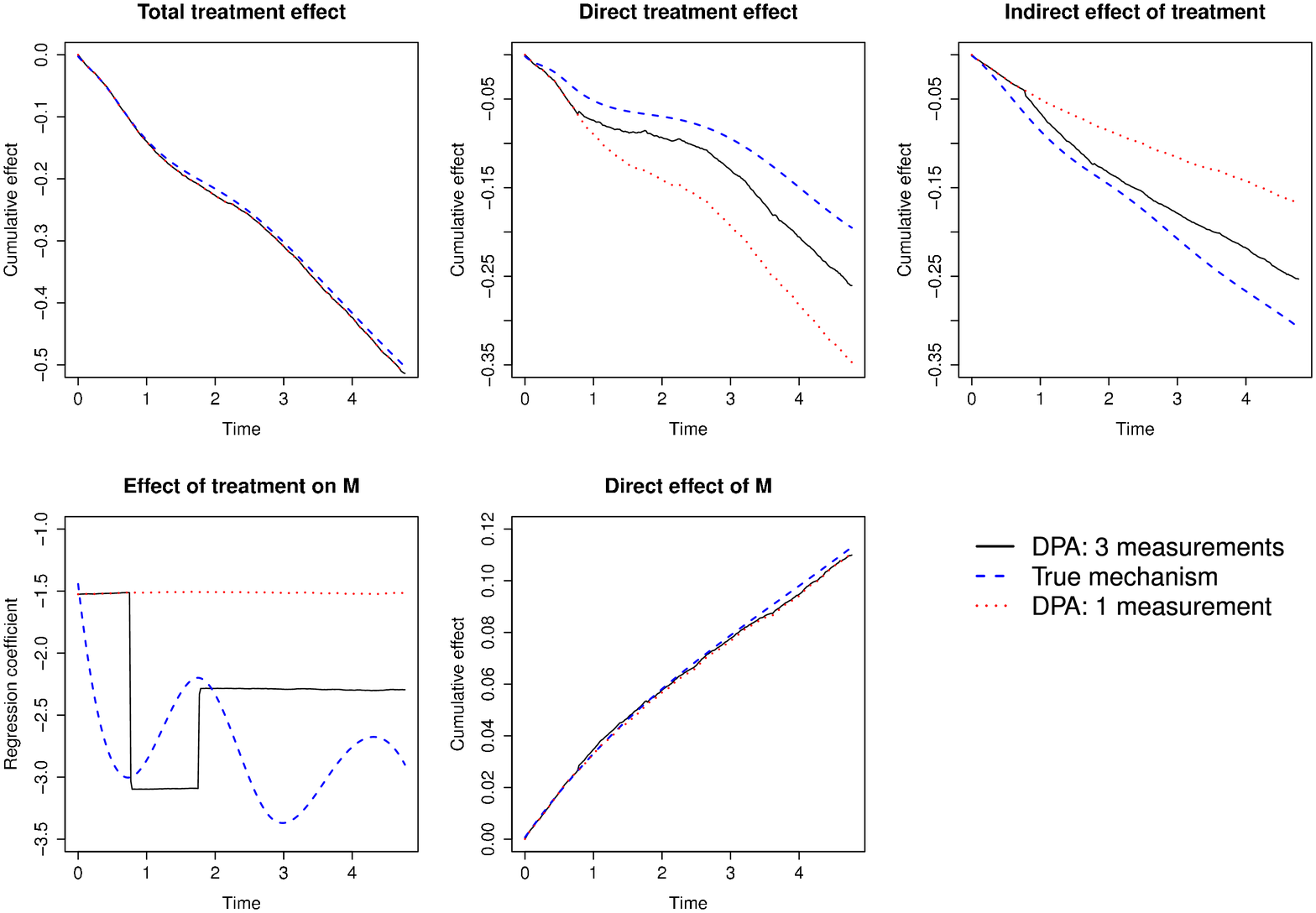}
	\caption{Simulation results, comparing a dynamic path model (DPA) only utilizing a single measurement compared to a model using three measurement and the true model.}
	\label{sim2}
\end{figure}
\begin{figure}[H]
	\centering
  \includegraphics[scale=0.45]{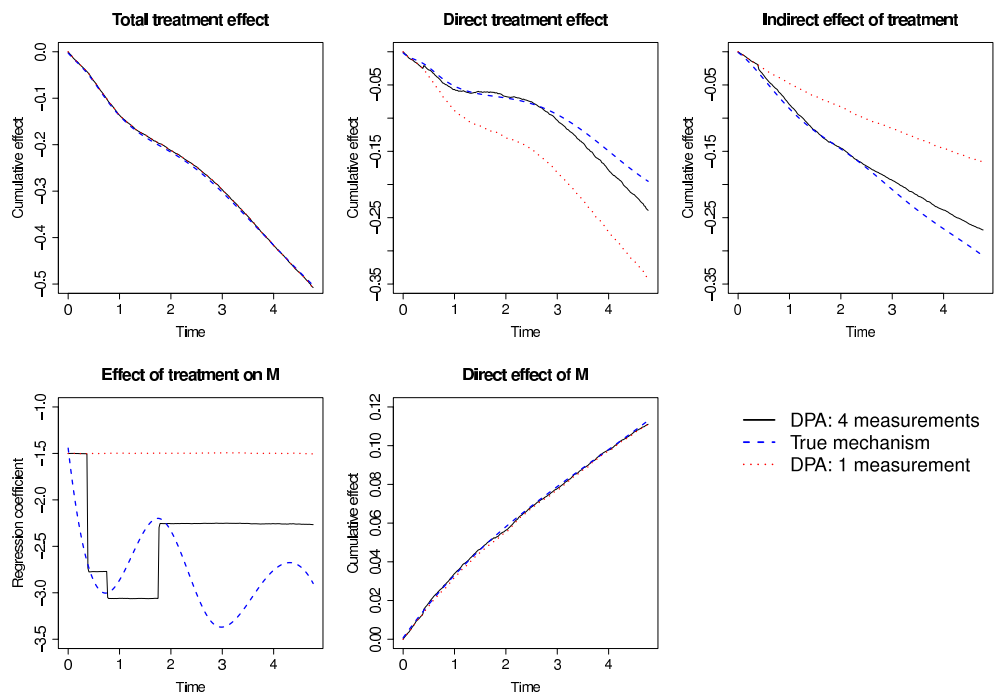}
	\caption{Simulation results, comparing a dynamic path model (DPA) only utilizing a single measurement compared to a model using four measurement and the true model.}
	\label{sim3}
\end{figure}
What can be observed throughout all considered scenarios is a tendency to underestimate the indirect effect while overestimating the direct effect of treatment on the survival outcome when only a single measurement is used. However, it also appears to be important at which particular time point or state of the mediator process the measurements are taken. For example, comparing Figures \ref{sim1} and \ref{sim2} one can see that the model using 2 measurements gives actually better results for most of the study period because the measurement of the mediator was taken at a time point where the treatment effect on the mediator roughly equaled the average treatment effect of the remaining study period. A similar pattern can be observed comparing Figures \ref{sim1} and \ref{sim3}. For the study period after year 2, both direct and indirect effects tend to deviate more from the truth in Figure \ref{sim3}. However, it can also be observed that having more frequent measurements before year 2 makes it by far more likely to actually capture the nature of the underlying process. 
\section{Analysis of the IDEAL data} \label{IDEAL_analysis}
In the following section we want to report the results from application of various dynamic path models to the data collected within the IDEAL trial. 
Again, we primarily want to focus on comparing models using a different amount of available mediator measurements. 
\subsection{The IDEAL study in brief}
Recall, that we are dealing with a secondary prevention trial, that aimed to compare two different statin treatment strategies (high dose of atorvastatin and standard dose of simvastatin) to lower cholesterol levels in patients with previous myocardial infarction. Patients eligible to the study (for detailed description of respective criteria see \cite{ideal_study_design}) were randomized to receive either 20 mg of simvastatin or 80 mg of atorvastatin, with a foreseen possibility to change dose at week 24. Lipoprotein levels (low density lipoprotein (LDL) - cholesterol, high density lipoprotein (HDL) - cholesterol, total cholesterol (TC), triclycerides (TG), apolipoprotein B-100 (Apo B) , apolipoprotein A (Apo A))  from fasting blood samples along with lever enzymes and other laboratory measurements were taken at baseline, at 12 and 24 weeks, 1 year and yearly thereafter.\\
The primary endpoint was defined as time to first occurrence of a major coronary event (CHD). However, to exploit more of the collected outcome information, we considered one of the broader secondary endpoints, time to 'any' coronary heart disease event, since - as mentioned above - our main focus concerns the amount of mediator measurements used. (For more detailed definition of the endpoints see \cite{ideal_main}.)
\subsection{Patient selection for the present analysis}
From the originally 8888 randomised patients, after careful considerations, 8646 patients were left for our analyses, as we required a certain amount of information to see any mediation effects in our application of dynamic path analysis. More specifically, this means that we excluded patients that had no baseline information on the considered mediator (described below) and also those still under risk at week 12 but with missing mediator values, as those values are needed to actually observe an effect of treatment. If successive measures of the time-dependent covariate were not on hand, the last observation was carried forward.
\subsection{Event frequency and deviations from the assigned treatment regime}
From the previously mentioned 8646 patients, 4350 patients were originally randomised to 20 mg of simvastatin and 4296 to 80 mg atorvastatin, where of 1019 patients experienced the event of interest in the simvastin group and 852 in the atorvastatin group.  
However, the study design included the option to adjust a patient's dosage at week 24. As a result of these possible adjustments a fairly large number of patients deviated from their allocated treatment at least once throughout the study period. In the present paper, we decided to focus on the effects of allocating the described intervention in practice and therefore present our results from dynamic path analysis applying the intention-to-treat (ITT) principle, analysing every individual as randomised.\\
Estimating effects comparing specific treatments at a specific dose would involve a more complicated analysis due to the special study design and is therefore subject to further work on its own. 
\subsection{Potential Mediators}
The a priori understanding of the underlying mechanisms made LDL-cholesterol the logical candidate for a mediator. However, due to the ongoing debate in clinical literature, whether LDL-cholesterol or apolipoprotein B (Apo B) should be considered as the target value for lipid therapy (see e.g. \cite{LDL_APO_B}), we also took Apo B into consideration as a potential mediator. Since Apo B and LDL-cholesterol are highly correlated and it is still unclear in which way statin treatment effects Apo B, we refrained from applying a more complex joint model and run separate analyses for LDL-cholesterol and Apo B. To not to distract from the main points we indented to highlight, we only present the development of the proportion of the effect mediated over time for LDL-cholesterol and want to refer to supplementary plots of the actual path effects in the Apo B models. \\
\subsection{Results}
Throughout the IDEAL study blood lipids were measured at 7 measurement points after baseline. So we decided to compare models where all available repeated measurements were utilized compared to a reduced analysis were we only utilized the baseline and  week 12 measurements of the mediator and discarded all measurements after week 12.
The reason for using the week 12 measurement was just because one would not see any effect of treatment on the mediator and therefore any indirect effect directly at administration of treatment. \\
In all models presented here the square root of LDL-cholesterol [mg/dL] was considered as the time-dependent mediating variable $X_2 (t)$.  The models for the treatment mediator relationship as well as the treatment outcome relationship included LDL-cholesterol at baseline (square root transformed), APO B level [g/L] at baseline (square root transformed), high density lipoprotein [mg/dL] at baseline (square root transformed) as well as smoking status, usage of statin treatments at randomisation, and furhter presence of hypertension, presence of diabetes, sex and age at baseline, as those covariates were reported to affect the risk of CHD events and also the LDL-cholesterol levels \cite{chd_risk_fact,ldl_risk_fact} .\\
In the left-hand column of Figure \ref{total} the estimated cumulative total, direct and indirect effects of atorvastatin compared to simvastatin mediated through LDL-cholesterol are shown using all available repeated measurements applying the ITT principle. In the right-hand column of Figure \ref{total} one can find the same arrangement of plots but for the analysis where only the baseline and week 12 measurements of the mediator was used. The week 12 measurement was carried forward for that analysis, mimicking a situation where one would only have access to one measurement. \\
In Figure \ref{direct_lm} the corresponding linear regression coefficients and the cumulative direct effect of LDL on the event outcome are displayed for these two models. 
Generally the patterns for the effect on LDL as well as the results for the indirect effect fit the intuitive understanding quite well, that statins at a higher dose have a rather rapid effect on LDL-cholesterol levels after receiving treatment, but it takes some time until these effects actually effect the risk of cardiovascular disease.\\
Comparing the plots of the treatment effect on LDL-cholesterol in Figure \ref{direct_lm} one can clearly see that the effect of atorvastatin compared to simvastatin on the mediating variable actually attenuates over time. By only utilizing the baseline and week 12 measurements, however, a constantly greater treatment effect on the mediator is assumed. Comparing the plots for the indirect effects one can observe that the indirect effect appears less pronounced when only the baseline and week 12 measurements are incorporated in the analysis. This could be due to the general tendency to underestimate the indirect effect that was also observable in the simulation study. Likewise, the direct effect of treatment behaves similar to our simulation results, where could observe a tendency for overestimation. Attempting to contribute to the ongoing discussion in clinical literature, we show the development of the proportion of effect mediated through LDL-cholesterol and Apo B in different plots in Figure \ref{prop_ldl}. It can be observed that the proportion of the effect mediated through Apo B has a more pronounced peak within the first year and also stabilizes at a higher level thereafter. 
\begin{figure}
\begin{center}
\includegraphics[width=0.8\textwidth]{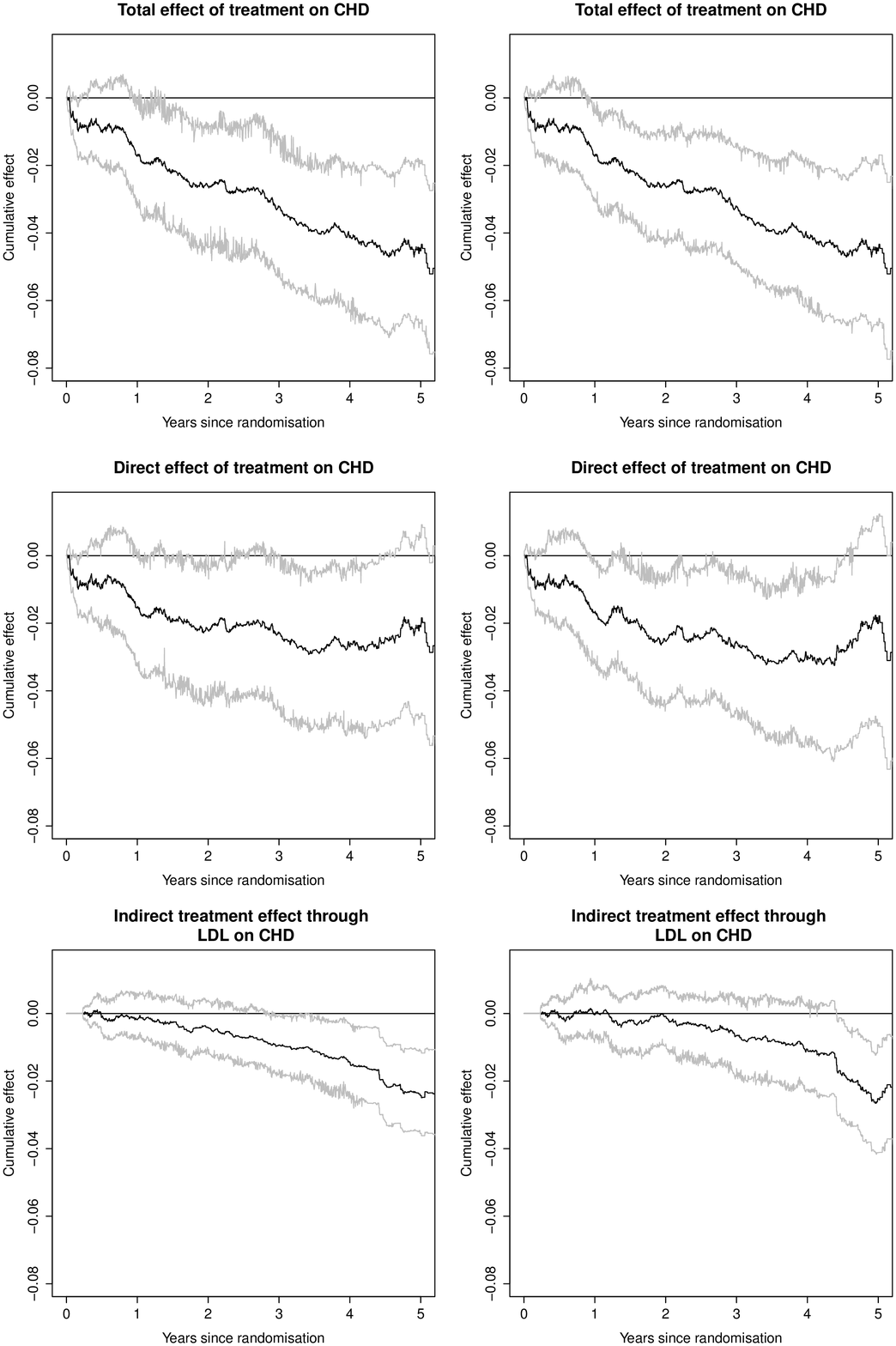}
\end{center}
\caption{First column: Estimated cumulative total, direct and indirect effect of atorvastatin compared to simvastatin mediated through LDL cholesterol using all available measurements; Second column: Effect estimates only utilizing the week 12 measurement; Grey lines represent the 95\% confidence intervals based on 200 bootstrap samples. }
	\label{total}
\end{figure}

\begin{figure}
\begin{center}
\includegraphics[width=0.75\textwidth]{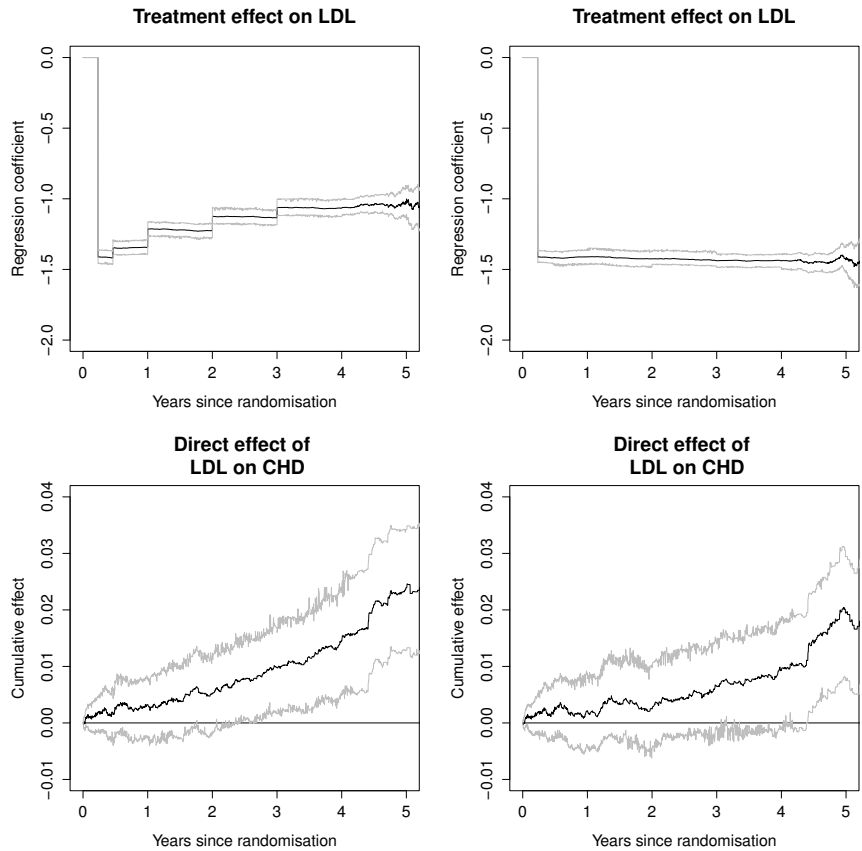}
\end{center}
\caption{First column: Regression coefficients over time for the effect of atorvastatin compared to simvastatin on LDL cholesterol and cumulative direct effect of LDL cholesterol on CHD using all available measurements; Second column: Effect estimates only utilizing the week 12 measurement; Grey lines represent the 95\% confidence intervals based on 200 bootstrap samples.}
\label{direct_lm}
\end{figure}
\begin{figure}
\begin{center}
\includegraphics[width=0.75\textwidth]{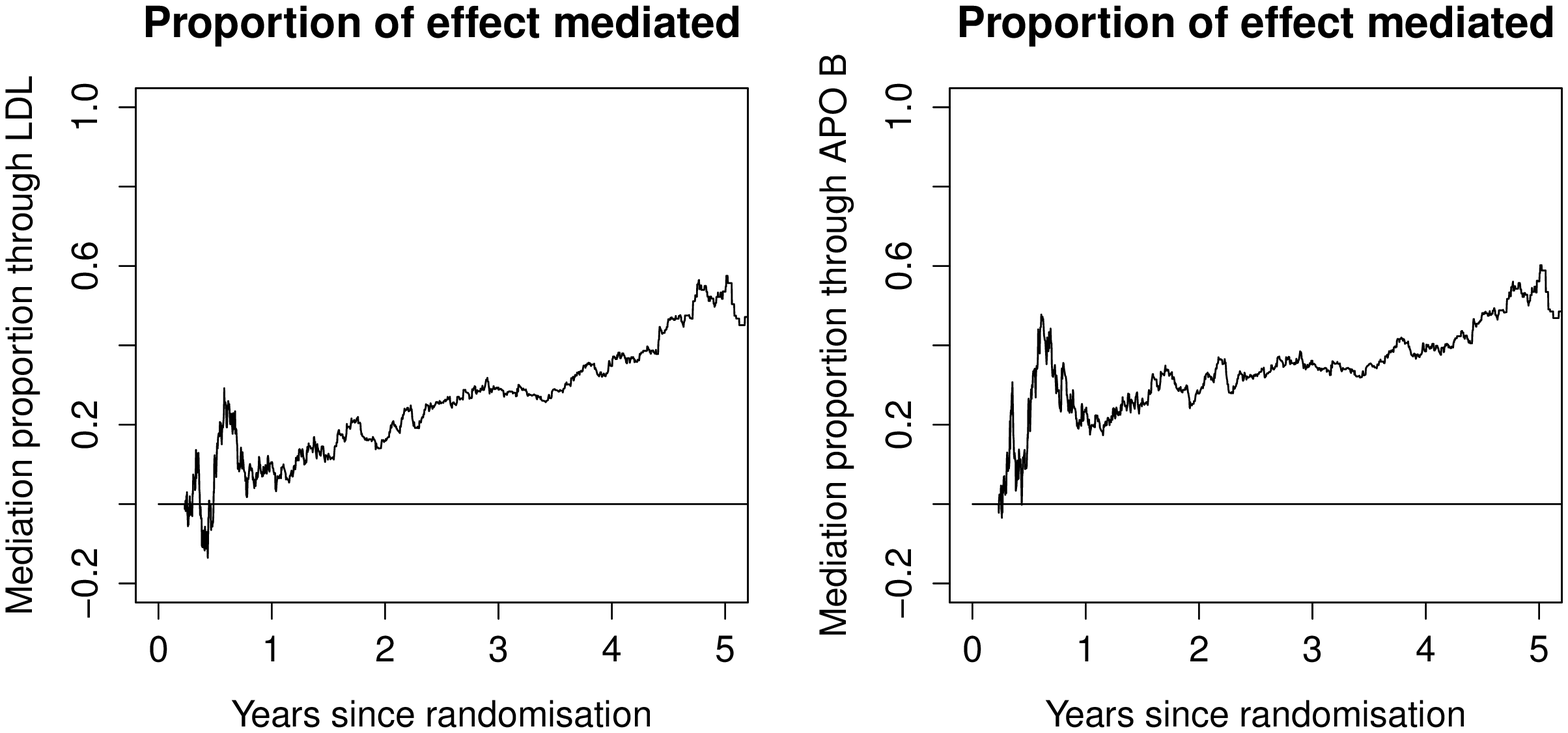}
\end{center}
\caption{Proportion of treatment effect mediated through LDL-cholesterol vs. Apo B. }
\label{prop_ldl}
\end{figure}

\section{Discussion}\label{discussion}
In this paper, we mainly addressed two issues around the model of dynamic path analysis.
Most importantly we could show, that conditioning on the collider 'survival' does not produce artificial associations between treatment and the mediator. This implies that whenever one finds the effect between treatment and the mediator to be changing over time, that phenomenon corresponds to an actual change in the relationship and is not spuriously introduced by survival selection. From our given argument the important notion follows, that unmeasured quantities that are not considered to be confounders but affect survival in an additive manner, will not turn into confounders as time passes by. A property that will not hold true for the Cox model. This is a consequence of the fact that the additive hazard model is collapsible, which is not the case for the Cox model \cite{collaps}. \\
Other works around the dynamic path analysis model investigated specific confounding scenarios, but only considered a time-fixed mediator \cite{martinussen2,lange}. In our data example we, however, assumed no unmeasured confounding between exposure and outcome as well as between the mediator and the outcome and aimed at a causal interpretation. But we are aware that the estimated cumulative direct, indirect and total effects have to be interpreted with caution, as these assumptions are generally not testable. \\
Concerning the results from our simulation study and our analyses of the IDEAL data we mainly want to emphasis one issue that has, for example, been mentioned in \cite{believe}. That is, when the underlying mediating mechanism is actually working in time, performing mediation analyses that only incorporate one single mediator measurement can distort the real picture and one may risk to deduce false conclusions from such analyses. \\
Looking at our presented plots, a pattern that could be observed all over was, that the bootstrap confidence intervals for the cumulative indirect effect appeared narrower than those for the direct effects and consequently for the total effect. This phenomenon could also be observed in the applications presented in \cite{gamborg}. We speculate that that could be due to measurement error in the assessment of LDL-cholesterol and Apo B, which was reported to be a problem for multi-center studies \cite{contois_lipids}. \\
Note that there are several other interesting challenges around the IDEAL data one could pick for more in-depth analyses. For example, the issue of non-compliance with assigned treatment, which had already been addressed in \cite{ideal_adherence}. One possible extension would be to employ inverse probability of censoring weights to and further perform reweighted dynamic path analysis model, as suggested in \cite{roysland}.\\
From a practical perspective estimates of effects over time could improve study planning of future trails with similar objectives. For example by considering the mediator as a surrogate endpoint, the respective plots of the cumulative indirect effect could help to preplan the timing of interim measurements.\\
In summary, dynamic path analysis is a useful tool for mediation analysis with a time-dependent mediator and a survival outcome. The obtained cumulative effect plots enable us to describe how direct and indirect effects evolve over time, which can add to the mechanistic understanding of the underlying processes. The linearity of the models for the treatment-mediator relationship as well as for the hazard appears to have appealing properties.
\newpage
\begin{appendix}
\section*{ APPENDIX: Causal interpretation in dynamic path analysis}
\section{The statistical model}
        We will consider scenarios that fit the following combined longitudinal
        and time-to-event model: 
        \subsection{Covariates} The main event of
        interest occurs at time $T > 0$, represented by a counting process
        $N_t$.  In addition, there are variables $X_1, \dots, X_n$ that can
        have outcomes before this event occurs.  More precisely, we assume that
        the outcome of each variable $X_i$ occurs at a deterministic time $t_i$
        if the main event has not already occurred, i.e. if  $T > t_i$.  We
        also assume that $i < j$ means that the outcome of $X_j$ can not occur
        strictly before $X_i$.  This means that we have deterministic times
        $t_1 \leq \dots \leq t_n$, possibly with repetitions.                            
          
        Let $P$ denote the joint density that governs the
        pre-intervention frequencies of these variables, and
        let $b_1, \dots ,b_n$ be functions on the form 
        \begin{align}
                b_k (X_1, \dots, X_{k-1}) = b_{k,0} + b_{k,1} X_1 + \dots + 
                b_{k,k-1} X_{k-1}.  
          \end{align}
        Assume that 
		\begin{align} \label{eq:ci}
         X_k - b_k(X_1, \dots, X_{k-1})  \ci
        \{X_1, \dots, X_{k-1} \} ~| ~T \geq t_k,
         \end{align}
        and 
       \begin{align} \label{eq:ce} 
        E [ X_k | X_1, \dots, X_{k-1}, T \geq t_k ] =  b_k(X_1, \dots, X_{k-1}) ,
  \end{align}
for every $k>1$ and let 
\begin{align}
b_{1,0}=E [X_1].
\end{align}
        \subsection{Covariate processes and additive hazards}                       
        The collection of possible events that could have occurred before $t$,
        the so-called history or filtration generated by $N_t$  and $\{ X_i
        |~  t_i \leq t\}$ will be denoted $\mathcal F_t$.  We will also need
        to consider the history restricted to $N_t$ and $\{ X_i |~  t_i \leq t,
        i \leq j\}$, which we will refer to as  $\mathcal F_t^j$.
        For mathematical convenience in the subsequent derivations, let us also introduce the following caglad function, $s \mapsto J_s$  from $[0,\infty)$ into the $n\times
        n$-matrices such that the only non-zero entries of $J_s$ are
        contained in the upper-left $n_s \times n_s$-corner, where $n_s :=
        \max{\{k | t_k \leq s\}}$.  
This gives us a vector-valued 'covariate
        process' $s \mapsto J_s X$ that is adapted to the filtration $\mathcal
        F_s$.                                
                    
        Finally, we assume that there exist functions $\beta_s^0$ and $\beta_s
        := (\beta^1_s, \dots, \beta_s^n)^\intercal$ such that
         \begin{align}
                \label{graphHaz} \alpha_s = \beta^0_s + \beta_s^\intercal  J_sX 
            \end{align}
                 defines the hazard of $N_s$ with respect to
        the pre-intervention joint density $P$ and the filtration $\mathcal
        F_s$. \\
Let us illustrate these concepts with the following example. Suppose we have a situation with a baseline variable and a 'process' that takes the first value at baseline, while it takes a new value at a later time, say $t$. Further, suppose the hazard of $T$, at time $s$, only depends linearly on the baseline variable and the current value of the process. Now, let $X^1$ denote the variable, $X^2$ denote the baseline value of the process and $X^3$ the subsequent value of the process. 
This gives 
\begin{align*}
J_s:=
\begin{pmatrix}
1 & 0 &0 \\
0 & I(s \leq t) & I(s>t)
\end{pmatrix},
\end{align*}
so for the covariate process we have
\begin{align*}
\begin{pmatrix}
X_1(s) \\
X_2(s)
\end{pmatrix}
= 
J_s 
\begin{pmatrix}
X^1\\
X^2\\
X^3
\end{pmatrix},
\end{align*}
so that the hazard can be written as
\begin{align*}
\alpha_s = \beta^0_s + \beta_s^\intercal  J_s 
\begin{pmatrix}
X^1\\
X^2\\
X^3
\end{pmatrix}.
\end{align*}

\subsection{Constant regression coefficients}
Due to linearity, we can estimate the coefficients $\{b_{k,j} \}_{ j> 0}$ 
by  performing ordinary linear regressions among the survivors at any time
after $t_k$, i.e.  
\begin{prop} \label{prop:constant}
    If we restrict to outcomes such that $T > t \geq t_k$, then  we have 
    \begin{align}
            E\big[ X_{k} \big|  \mathcal F_t^{k-1}  \big] = 
            \sum_{j = 1}^{k-1} b_{k,j} X_j +  E \big[  W_k  
           \big| T >t \big],  
    \end{align}
    where $W_k :=  X_{k} - b_k   ( X_1, \dots, X_{k-1})   $.
\end{prop}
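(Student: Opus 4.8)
The plan is to transcribe the single-time-point calculation of Section \ref{survival_collider} into the filtration formalism of the appendix; the only genuinely new ingredient is the bookkeeping forced by the caglad activation matrix $J_s$. I would organise the argument in three moves: reduce the conditioning, peel off the deterministic linear part, and then establish that the residuals $W_k$ stay independent of the past under conditioning on $\{T>t\}$.

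First I would reduce the conditioning set. Since $t\geq t_k\geq t_{k-1}\geq\dots\geq t_1$, every covariate $X_1,\dots,X_{k-1}$ has already occurred by time $t$ and is $\mathcal F_t^{k-1}$-measurable; moreover $X_k$ itself is well defined on $\{T>t\}$ because $T>t\geq t_k$. On the event $\{T>t\}$ the counting process satisfies $N_s=0$ for all $s\leq t$, so $\mathcal F_t^{k-1}$ carries no information beyond $\sigma(X_1,\dots,X_{k-1})$ together with the event $\{T>t\}$ itself. Hence on $\{T>t\}$ one may replace $E[X_k\mid\mathcal F_t^{k-1}]$ by $E[X_k\mid X_1,\dots,X_{k-1},\,T>t]$, which is exactly the object treated in the main text. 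Writing $X_k=b_k(X_1,\dots,X_{k-1})+W_k$ and using that $b_k(X_1,\dots,X_{k-1})=b_{k,0}+\sum_{j<k}b_{k,j}X_j$ is measurable with respect to the conditioning, linearity of conditional expectation gives $E[X_k\mid X_1,\dots,X_{k-1},T>t]=b_{k,0}+\sum_{j<k}b_{k,j}X_j+E[W_k\mid X_1,\dots,X_{k-1},T>t]$. The entire proposition therefore reduces to showing that this residual mean does not depend on the conditioning covariates, i.e. that $E[W_k\mid X_1,\dots,X_{k-1},T>t]=E[W_k\mid T>t]$; the slope coefficients $b_{k,j}$ for $j\geq1$ are then automatically preserved and only the constant absorbs the selection effect.

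The crux, and the step I expect to be the main obstacle, is to show that the independence of the residuals survives conditioning on $\{T>t\}$; this is where additivity of the hazard \eqref{graphHaz} is indispensable. Following Section \ref{independent_cov}, I would solve the structural equations in matrix form, $\mathbf X=(\mathbf I-\mathbf B)^{-1}(\boldsymbol b_0+\boldsymbol W)$, so that each $X_i$ is an affine function of $W_1,\dots,W_i$; in particular $\sigma(X_1,\dots,X_{k-1})=\sigma(W_1,\dots,W_{k-1})$, and the hazard $\alpha_s=\beta_s^0+\beta_s^\intercal J_sX$, being linear in $X$, is linear in $\boldsymbol W$. Consequently the conditional survival probability $P(T>t\mid\boldsymbol W)=\exp(-\int_0^t\alpha_s\,\mathrm{d}s)$ is the exponential of an affine function of $\boldsymbol W$ and therefore factorises into a product $\tilde C(t)\prod_i g_i(W_i)$ of one-dimensional factors. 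Each residual $W_k$ is independent of $W_1,\dots,W_{k-1}$ given survival to its birth time $t_k$ — this is the content of \eqref{eq:ci} once one uses that $X_1,\dots,X_{k-1}$ determine $W_1,\dots,W_{k-1}$ — and multiplying this conditional-independence structure by the product likelihood above leaves the $W_i$ jointly independent conditional on $\{T>t\}$, exactly as in Section \ref{independent_cov}. Thus $W_k\ci(W_1,\dots,W_{k-1})\mid T>t$, and since $(X_1,\dots,X_{k-1})$ is a bijective affine image of $(W_1,\dots,W_{k-1})$, we obtain $E[W_k\mid X_1,\dots,X_{k-1},T>t]=E[W_k\mid T>t]$, as required.

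The one place that demands genuine care beyond the two-covariate computation of Section \ref{independent_cov} is the factorisation in the presence of $J_s$: because $W_k$ is activated only on $[t_k,t]$, I would have to verify that the integrated hazard $\int_0^t\beta_s^\intercal J_sX\,\mathrm{d}s$ is still an affine function of the whole vector $\boldsymbol W$ with coefficients depending only on $t$, so that the single-$W_i$ product genuinely survives the progressive activation encoded by $J_s$. This is precisely the generalisation of Section \ref{independent_cov} from two baseline covariates to the caglad covariate-process setting, and handling this time-ordering is the only substantive difference from the main-text argument. Assembling the three moves yields $E[X_k\mid\mathcal F_t^{k-1}]=\sum_{j<k}b_{k,j}X_j+\big(b_{k,0}+E[W_k\mid T>t]\big)$, i.e. the claimed identity with the constant term carrying the survival-selection effect, consistent with the main-text remark that this shift in the constant is harmless.
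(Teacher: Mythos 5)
Your proposal is correct in substance but reaches the proposition by a genuinely different route from the paper. The paper's own proof is a two-line computation that delegates all real work to Lemma \ref{OddsIndependence2} (stated and proved inside the proof of Theorem \ref{thm:main}): there, the conditional independence $W_k \ci (W_1,\dots,W_{k-1}) \mid T>t$ is established by backward induction on the index, conditioning on $\mathcal F_{t_j}^{j-1}\cap\{T>t_j\}$, writing the survival weight $\exp\big(-\int_{t_j}^t\phi_s\,\mathrm{d}s\big)$, splitting $\phi_s$ additively so the factor in $W_j$ separates from the $\mathcal F_{t_j}^{j-1}$-measurable factor, and invoking the innovation theorem to identify the hazard with respect to the reduced filtrations $\mathcal F_t^j$. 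You instead lift the elementary density-factorization argument of Section \ref{survival_collider} to the process setting: solve the triangular system, note the hazard is affine in $\boldsymbol W$, and factorize the exponential survival weight coordinatewise (with the $J_s$ bookkeeping only shifting the lower integration limit of the $i$-th coefficient to $t_i$). For Proposition \ref{prop:constant} your route is leaner, since you need only the conditional independence of the residuals and not the innovation-theorem identification of the marginalized hazard, which the paper's lemma carries along because Theorem \ref{thm:main} requires it; your first two moves (reducing $\mathcal F_t^{k-1}$ to $\sigma(X_1,\dots,X_{k-1})$ on $\{T>t\}$, then peeling off the measurable linear part) coincide with the paper's short computation.

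One step deserves more care than your sketch gives it: writing $P(T>t\mid\boldsymbol W)=\exp\big(-\int_0^t\alpha_s\,\mathrm{d}s\big)$ presupposes a baseline joint law of the full residual vector, whereas in the appendix model $W_j$ is only realized at time $t_j$ on $\{T\geq t_j\}$, and \eqref{eq:ci} specifies independence only conditionally at each birth time, not a product law at time zero. The rigorous version of your factorization is sequential: build the joint law of $(W_1,\dots,W_{n_t})$ given $T>t$ by alternating the conditional draw of $W_j$ at $t_j$ (independent of the past by \eqref{eq:ci}) with the multiplicative weight $\exp\big(-\int_{t_j}^{t_{j+1}\wedge t}\alpha_s\,\mathrm{d}s\big)$ over each inter-birth interval; additivity keeps every weight a product of single-coordinate factors, so the conditional joint law factorizes. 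This telescoping is precisely what the paper's backward induction accomplishes in conditional-expectation form, so the gap is presentational and fixable, not fatal. Finally, note that you conclude with $\sum_{j<k}b_{k,j}X_j+\big(b_{k,0}+E[W_k\mid T>t]\big)$, while the proposition as printed omits $b_{k,0}$; given the definition $W_k=X_k-b_k(X_1,\dots,X_{k-1})$ with $b_k$ containing the intercept, your version is the correct one, and the discrepancy is a harmless slip in the paper's statement and proof, consistent with the main text's remark that only the constant term absorbs the survival-selection effect.
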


        \section{Causal inference}

        \subsection{Causal validity}       
      We will say that the model is causal if, for every $V = (V_1, \dots,
      V_n)$, we can specify interventions that would have given  joint
      densities $\tilde P$ such that \eqref{eq:ce} and \eqref{eq:ci} hold when
      each function $b_k$ is replaced by $b_k + V_k$, while $\alpha$
      still defines a hazard with respect to $\tilde P$. If the model is causal, we are
      able to calculate the causal effects from these interventions on the
      marginal hazard of $N$. 

        \begin{thm} \label{thm:main}
             Let  $\alpha_t^{0}$ and $\tilde \alpha_t^{0}$  denote the marginal hazards of $N_t$
for $P$ and $\tilde P$  respectively. We have that 
\begin{align}\label{eq:mainEffect} 
       \tilde \alpha_t^0 =  \alpha^0_t  +  \sum_i \big(\beta_t ^\intercal J_t
       \big)_i V_i 
   + \sum_{ l> i } \big(\beta_t^\intercal  J_t \big)_l \sum_{p}  b_{i_k, i_{k-1}}
       b_{i_{k-1}, i_{k-2}} \dots   b_{i_2, i_{1}}V_i ,  
\end{align}
where we sum over partitions 
$p = (i_1 = i < i_2 < \dots < i_{k-1} < i_k = l)$, which correspond to the paths defined in Fosen, Section 2.5 \cite{fosen}.
        \end{thm}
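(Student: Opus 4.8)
The plan is to pass to the marginal hazard and then track how the intervention displaces the covariate means among the survivors. First I would use the innovation theorem for counting processes: since $N$ is a single-jump survival process, the intensity with respect to its own minimal filtration is the $\{T \geq t\}$-projection of the full-filtration hazard \eqref{graphHaz}. As $\beta^0_s$ and $\beta_s$ are deterministic they factor out of the conditional expectation, so
\begin{align*}
\alpha_t^0 = \beta_t^0 + \beta_t^\intercal J_t\, E[X \mid T \geq t], \qquad \tilde\alpha_t^0 = \beta_t^0 + \beta_t^\intercal J_t\, \tilde E[X \mid T \geq t],
\end{align*}
with $\tilde E$ denoting expectation under the intervened law $\tilde P$. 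Subtracting collapses the theorem to the computation of the shift vector $\delta$ with entries $\delta_k := \tilde E[X_k \mid T \geq t] - E[X_k \mid T \geq t]$, because
\begin{align*}
\tilde\alpha_t^0 - \alpha_t^0 = \beta_t^\intercal J_t\, \delta = \sum_l (\beta_t^\intercal J_t)_l\, \delta_l.
\end{align*}

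Next I would set up a recursion for $\delta$. Averaging the structural equation $X_k = b_{k,0} + \sum_{j<k} b_{k,j} X_j + W_k$ over the survivors (equivalently, taking expectations in Proposition~\ref{prop:constant}) gives $E[X_k \mid T \geq t] = b_{k,0} + \sum_{j<k} b_{k,j} E[X_j \mid T \geq t] + E[W_k \mid T \geq t]$, and under $\tilde P$ the same identity holds with $b_{k,0}$ replaced by $b_{k,0}+V_k$ and $W_k$ by its intervened counterpart, by causal validity. Subtracting the two, the recursion
\begin{align*}
\delta_k = V_k + \sum_{j<k} b_{k,j}\, \delta_j, \qquad \delta_1 = V_1,
\end{align*}
will emerge \emph{provided} the residual means among survivors agree, i.e. $\tilde E[W_k \mid T \geq t] = E[W_k \mid T \geq t]$.

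Establishing that cancellation is the step I expect to be the main obstacle, since the conditioning event $\{T \geq t\}$ is itself altered by the intervention. The resolution is the collapsibility of the additive model exploited in Section~\ref{independent_cov}: writing the survival weight as $\exp(-\int_0^t (\beta_s^0 + \beta_s^\intercal J_s X)\,\mathrm{d}s)$, the intervention shifts $X$ by the deterministic vector $(\mathbf{I}-\mathbf{B})^{-1} V$, and because the hazard is linear in $X$ this shift contributes only a factor constant in the noise, which cancels between the numerator and denominator of the conditional expectation. Hence the conditional law of the residuals given survival, and in particular $E[W_k \mid T \geq t]$, is unchanged by the intervention, exactly as independence was preserved in Section~\ref{independent_cov}. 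This is precisely the additive-model phenomenon that fails for the Cox model.

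Finally I would solve the recursion in closed form. Writing $\mathbf{B}=(b_{k,j})$ for the strictly lower-triangular coefficient matrix, the recursion reads $\delta = V + \mathbf{B}\delta$, so $\delta = (\mathbf{I}-\mathbf{B})^{-1} V = (\mathbf{I}+\mathbf{B}+\mathbf{B}^2+\cdots) V$, the Neumann series terminating because $\mathbf{B}$ is nilpotent. The $(l,i)$ entry of $(\mathbf{I}-\mathbf{B})^{-1}$ is the sum over directed paths $i = i_1 < i_2 < \dots < i_k = l$ of the product $b_{i_k,i_{k-1}} \cdots b_{i_2,i_1}$, i.e. Fosen's path products. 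Substituting $\delta_l = V_l + \sum_{i<l} \big( \sum_p b_{i_k,i_{k-1}} \cdots b_{i_2,i_1} \big) V_i$ into $\sum_l (\beta_t^\intercal J_t)_l\, \delta_l$ and separating the trivial path $i=l$ from the proper ones yields exactly \eqref{eq:mainEffect}. Beyond the residual-cancellation step, everything reduces to the innovation theorem and linear algebra.
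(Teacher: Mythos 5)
Your proposal is correct and takes essentially the same route as the paper: reduction via the innovation theorem to survivor means, invariance of the residuals' survivor means under the intervention because the linear (deterministic) shift factors out of the exponential survival weight and cancels in the conditional-expectation ratio (this is exactly the content of the paper's two lemmas), and then the expansion of $(\mathbf{I}-\mathbf{B})^{-1}$ --- the paper's ordered product $(I+\nabla_n)\cdots(I+\nabla_2)$ --- into Fosen's path products. The only difference is presentational: the paper runs your cancellation argument piecewise, by backward induction over the covariate times and with future residuals replaced by their survivor means via the innovation theorem, since the hazard is specified only with respect to $\mathcal{F}_s$ and a one-shot exponential weighting from time $0$ conditional on the full noise vector is not directly licensed by the model; the linearity-based splitting you identify is precisely what makes each inductive step go through.
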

        
        \subsection{Mediation}
        Suppose we could perform an intervention that would only perturb 
        the component $X_i$, i.e. we obtain a joint density $\tilde P$, where
        only the function $b_i$ is replaced by $b_i + \varepsilon$.

                              We see from \eqref{eq:mainEffect} that the
                                      effect seems to 
                                      propagate through the system. 
                To isolate the effect that is mediated through a variable 
                        $X_j$, where $j > i$, we could instead 
                        manipulate the system such that  $X_j$ behaved as if
                        $X_i$ had been perturbed, while we let $X_i$
                        remain unperturbed. 
                 
                        Note that we have
                        \begin{align*}
                                 b_j( X_1, \dots, X_{i-1},X_i + \epsilon, 
                                X_{i+1}, \dots, X_{j-1} )  =  
                                b_j(X_1, \dots, X_{j-1}) + b_{j,i} \cdot
                                \varepsilon. 
                        \end{align*}
               This means that if the model is causal,  
               then we could intervene such that 
                        $b_j(X_1, \dots, X_{j-1})$ is replaced by 
                        $b_j(X_1, \dots, X_{j-1}) +  b_{j,i} \cdot \varepsilon$
                        to obtain a scenario as we just described. 
                 Therefore, by Theorem  \ref{thm:main}, the mediated 
                 effect on the marginal hazard of $N$                     
                 is given by 
                         \begin{align} \label{eq:medEffect}
        \sum_{ l> i } \big(\beta_t J_t \big)_l \sum_{p}  b_{i_k, i_{k-1}}
       b_{i_{k-1}, i_{k-2}} \dots   b_{i_2, i_{1}}  b_{j,i} \varepsilon,  
\end{align}
where we sum over partitions 
$p$ such that $p= (i_1 = j < i_2 < \dots < i_{k-1} < i_k = l)$. 
This resembles the classical definition of natural indirect effects that builds on nested
counterfactuals, see \cite{pearl_book}, \cite{robins_greenland_identifiability} and \cite{lange}
Note that we are able to define indirect effects without explicit use of 
nested counterfactuals by exploiting the linearity. It is an interesting
observation that this, at least approximately, gives meaning for non-linear models
where $b_1, \dots, b_n$ are replaced by smooth-functions, since smooth
functions behave as if they are affine for small $\varepsilon$. 
\subsection{Causal pathways and path-effects}
  Suppose we are given a pathway: 
                \begin{align} 
                        \label{eq:path}
                        X_{i_1} \rightarrow X_{i_2} \rightarrow 
        \dots \rightarrow X_{i_k} \rightarrow N_t,
                \end{align}
                where $i_j < i_{j+1}$. 
        If the model is causal with respect to all the nodes $X_{i_1},
                \dots, X_{i_k}$, then we can isolate the effect 
                that flows only through this path by 
                first isolating the effect that is mediated by $X_{i_2}$, and
                then iterate this contrast for each $i_j$ such that $1 < j \leq
                k$. This gives the following path-effect:
                \begin{align} \label{eq:pathEffect}
   \big(\beta_t J_t \big)_{i_k} b_{i_k, i_{k-1}}
       b_{i_{k-1}, i_{k-2}} \dots   b_{i_2, i_{1}}\varepsilon. 
                \end{align}
              
By Proposition \ref{prop:constant}, the path-effect along \eqref{eq:path}, as defined in the dynamic path analysis (\cite{fosen}), is simply the integral of
\eqref{eq:pathEffect} in the special case $\varepsilon = 1$.  
\subsection{Direct effects}
        To isolate the effect from $X_i$ that is not mediated through the other
                covariates, we have  to attenuate any other effect the
                intervention $X_i \mapsto X_i + \epsilon$ would have had. 
        For $i \neq j$, we want to change each function $b_j$ such that 
                $X_j$ would behave as if $X_i$ was not perturbed, even if it
                was. 
        Note that 
                \begin{align}
                        b_j( X_1, \dots, X_{i-1}, X_i - \epsilon, X_{i+1},
                        \dots, X_{j-1}) =
                        b_j( X_1, \dots, X_{j-1}) - b_{j,i} \varepsilon. 
                \end{align}
        This means that if we perturbed each function $b_j$ into $b_j -
                b_{j,i} \varepsilon$ while we  also perturbed $X_i$ such that 
                $b_i$ is replaced by $b_i + \varepsilon$, 
                then $\tilde \alpha_t - \alpha_t$ represents the  direct effect.
        The sum
                \eqref{eq:mainEffect} telescopes for 
                $$V = (0, \dots,0, \varepsilon, -b_{i+1,i} \varepsilon, 
                \dots,    -b_{n,i} \varepsilon ),$$ so the direct effect equals 
                \begin{align} \label{eq:directEffect}
                        \tilde \alpha_t  -   \alpha_t = \big(\beta_t ^\intercal J_t \big)_i \varepsilon.  
                \end{align}

        Note that the direct effect, as defined in the dynamic
path analysis (\cite{fosen}), is simply the integral of
\eqref{eq:directEffect} in the special case $\varepsilon = 1$.                             
  \section*{ Proofs}
        \begin{proof} [Proof of Theorem \ref{thm:main}]
 Let $\nabla_i$ denote the $n \times
n$-dimensional matrix  where the $i$'th row equals 
$         \begin{pmatrix} 
              b_{i,1} &  \dots &  b_{i,i-1} & 0   & \dots & 0, 
        \end{pmatrix}$
and all the remaining entries  are $0$. 
                Moreover, let 
                \begin{align}
                                        \phi_s( x_1, \dots, x_n) :=  \beta_s^\intercal J_s (
                                                       I + \nabla_n)  \dots (I
                                                       + \nabla_2)
                                                       \begin{pmatrix} x_1
                                                               \\ \vdots \\ x_n
                                                       \end{pmatrix}, 
                                \end{align}
                and note that 
\begin{align*}
        \phi_t(x_1, \dots, x_n) = \phi_t( x_1 I( t \geq t_1), \dots, x_n I (
        t \geq t_n)) . 
\end{align*}

                \begin{lem} \label{OddsIndependence2}
                        Each $W_{j}$ is independent of $W_1, \dots, W_{j-1}$
                         conditionally on  $T > t
                        \geq t_j$ and \begin{align} \label{margHaz}
        \phi_t(W_1, \dots, W_{j},  E [ W_{j+1} |  T > t ]        ,
        \dots,  E [ W_{n} | T > t])
\end{align}
defines the hazard of $T$ with respect to $\mathcal F_t^j$
with respect to both $P$ and $\tilde P$. 
  \end{lem}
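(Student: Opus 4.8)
The plan is to re-express the full intensity \eqref{graphHaz} in terms of the structural residuals $W_k := X_k - b_k(X_1,\dots,X_{k-1})$, transport the baseline conditional independence \eqref{eq:ci} forward under survival selection, and then read off the $\mathcal F^j$-intensity by an innovation-theorem projection. The algebraic backbone is the identity $(I+\nabla_n)\cdots(I+\nabla_2) = (I-B)^{-1}$, where $B := \sum_{k\ge 2}\nabla_k$ is the strictly lower-triangular slope matrix. It holds because $\nabla_a\nabla_b\neq 0$ forces $b<a$, so a monomial $\nabla_{a_1}\cdots\nabla_{a_m}$ is nonzero only along a strictly decreasing chain; expanding the left-hand product therefore reproduces, term by term, the Neumann series $\sum_{m\ge0}B^m$, and these chains are exactly the paths of Fosen, Section 2.5 \cite{fosen}. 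Since $X=(I-B)^{-1}(\boldsymbol b_0+W)$, the $W$-linear part of $\beta_s^\intercal J_s X$ is precisely $\phi_s(W)=\beta_s^\intercal J_s(I-B)^{-1}W$; hence $\alpha_s$ is additive not only in the $X$'s but in the $W$'s, with deterministic time-varying coefficients and a deterministic baseline.

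First I would establish the independence assertion by induction on the number of residuals, mimicking Section \ref{independent_cov}. The inductive hypothesis is that $W_1,\dots,W_{k-1}$ are mutually independent given $T>s$ for every $s\ge t_{k-1}$. Evaluating at $s=t_k$ and combining with \eqref{eq:ci}, which gives $W_k\ci\{W_1,\dots,W_{k-1}\}\mid T\ge t_k$ (as $W_1,\dots,W_{k-1}$ are deterministic functions of $X_1,\dots,X_{k-1}$), yields mutual independence of $W_1,\dots,W_k$ given $T\ge t_k$. To propagate this to every later $s\ge t_k$ I would factor the survivor weight: by the additive-in-$W$ representation, $\int_{t_k}^{s}\alpha_u\,du$ is a deterministic baseline plus a sum of single-residual terms, so $P(T>s\mid W,\,T\ge t_k)=\exp(-\int_{t_k}^s\alpha_u\,du)$ factorizes across the $W_l$. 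The Bayes computation of Section \ref{independent_cov} then shows that a multiplicative survivor weight cannot couple residuals that were independent to begin with, completing the induction and giving $W_j\ci\{W_1,\dots,W_{j-1}\}\mid T>t$ for every $t\ge t_j$.

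Next I would derive the intensity formula \eqref{margHaz}. By the innovation theorem the $\mathcal F^j$-intensity is the projection $E[\alpha_t\mid\mathcal F_{t-}^j]$ of the full intensity, and since $\alpha_t$ is linear in $W$ this projection passes through $\phi_t$ and acts coordinatewise. For $k\le j$ the residual $W_k$ is $\mathcal F_t^j$-measurable (as $X_1,\dots,X_j$ are observed by time $t\ge t_j$) and is left untouched. For $k>j$ the residual is unobserved, and here the independence just established gives $E[W_k\mid\mathcal F_{t-}^j]=E[W_k\mid T>t]$; this is also exactly the conditional-mean content of Proposition \ref{prop:constant}. Substituting produces $\phi_t(W_1,\dots,W_j,E[W_{j+1}\mid T>t],\dots,E[W_n\mid T>t])$, as claimed. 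The same derivation holds under $\tilde P$ because the admissible interventions only shift the intercepts $b_{k,0}$, leaving the slopes $b_{k,j}$ ($j>0$), and hence $\nabla_k$, $\phi$ and $\beta$, unchanged; by causal validity \eqref{eq:ci}--\eqref{eq:ce} persist with each $b_k$ replaced by $b_k+V_k$, so the residuals keep their preserved independence and the identical argument applies.

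I expect the main obstacle to be the coordinatewise replacement for $k>j$, i.e.\ justifying $E[W_k\mid\mathcal F_{t-}^j]=E[W_k\mid T>t]$. This is where the collider-preservation result does the real work, and it is delicate for two reasons. Assumption \eqref{eq:ci} is imposed only at each variable's own realization time $t_k$, so it must be propagated to the common, later time $t$ through the survivor weighting without manufacturing dependence --- which is why the induction above is needed rather than a one-step argument. Moreover, $\mathcal F_{t-}^j$ records the entire sample path of $N$ up to $t-$, and this must first be reduced, for a single-event survival process, to the at-risk event $\{T>t\}$ before the factorization of Section \ref{independent_cov} can be invoked. Once these two points are handled, the linearity of the additive model renders the remaining substitutions routine.
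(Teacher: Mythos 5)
Your overall architecture --- rewrite the hazard additively in the residuals via $(I+\nabla_n)\cdots(I+\nabla_2)=(I-B)^{-1}$, show that independence of the $W_k$'s survives the survival selection, then apply the innovation theorem once at the end --- is sound in outline, and it runs in the opposite direction to the paper's proof. The paper argues by \emph{backward} induction from $j=n$ down, invoking the innovation theorem \emph{inside} the induction: at stage $j$ the survivor weight is built from the already-reduced hazard $\phi_s(W_1,\dots,W_j,E[W_{j+1}\mid T>s],\dots,E[W_n\mid T>s])$, which depends only on $W_1,\dots,W_j$ plus deterministic functions of $s$, so the linear splitting $\phi_s(W_1,\dots,W_j,\dots)=\phi_s(W_1,\dots,W_{j-1},0,\dots)+\phi_s(0,\dots,0,W_j,0,\dots)$ lets the $\mathcal F^{j-1}_{t_j}$-measurable factor cancel between numerator and denominator, and \eqref{eq:ci} at $t_j$ finishes the step. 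That design exists precisely to avoid the difficulty your forward induction runs into.

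The gap is in your propagation step. From mutual independence of $W_1,\dots,W_k$ given $T\geq t_k$ you pass to all $s\geq t_k$ on the grounds that $\exp\bigl(-\int_{t_k}^s\alpha_u\,\mathrm{d}u\bigr)$ factorizes across the $W_l$. It does --- but across \emph{all} realized residuals $W_l$ with $t_l\leq s$, including $W_{k+1},\dots,W_{n_s}$ once $s\geq t_{k+1}$, and the joint law of these with $(W_1,\dots,W_k)$ given $T\geq t_k$ is not controlled at stage $k$: assumption \eqref{eq:ci} for index $k+1$ is conditional on $T\geq t_{k+1}$, not on $T\geq t_k$. Marginalizing the later residuals out of the weight leaves a factor
\[
E\Bigl[\textstyle\prod_{l>k}f_l(W_l)\,\Big|\,W_1,\dots,W_k,\;T\geq t_k\Bigr]
\]
that may depend on $(W_1,\dots,W_k)$ non-multiplicatively, so the Bayes computation of Section \ref{independent_cov} does not apply and "a multiplicative survivor weight cannot couple independent residuals" is not yet available. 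The repair stays within your framework: propagate interval by interval along the grid. For $u\in[t_m,t_{m+1})$ the matrix $J_u$ annihilates unrealized coordinates, so on $[t_k,t_{k+1})$ the weight is a deterministic, factorizing function of $(W_1,\dots,W_k)$ alone and independence is preserved up to $T\geq t_{k+1}$; there \eqref{eq:ci} adjoins $W_{k+1}$ (your observation that $A\ci(B_1,\dots,B_k)$ plus mutual independence of the $B$'s gives mutual independence of all $k+1$ variables is correct), and one recurses; the times $s\geq t_{k+1}$ at stage $k$ are then covered a posteriori, since mutual independence of $W_1,\dots,W_{n_s}$ given $T>s$ implies that of the subcollection $W_1,\dots,W_k$. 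With that restructuring your proof is complete, and it even yields a slightly stronger and cleaner conclusion than the paper's (mutual independence of all realized residuals given survival, with a single innovation projection at the end); your remaining ingredients --- the reduction of $\mathcal F_{t-}^j$ on $\{T>t\}$ to the realized covariates, the coordinatewise substitution $E[W_k\mid\mathcal F_{t-}^j]=E[W_k\mid T>t]$ for $k>j$, and the transfer to $\tilde P$ (residuals shift by the constants $V_k$, while $\phi$ and the slopes are unchanged) --- are all handled correctly.
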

\begin{proof}
For every continuous function $h$
        \begin{align*}
                & E [  h(W_n) | \mathcal F_t^{n-1} \cap \{ T>t\} ] \\
                = & \frac{E [  h(W_n)  \exp( -\int_{t_n}^t \phi_s (W_1, \dots, W_n) ds   )   
                        | \mathcal F_{t_n}^{n-1} \cap \{ T>t_n\}  ]     }{E [  \exp(- \int_{t_n}^t \phi_s (W_1, \dots, W_n) ds   )   
                        | \mathcal F_{t_n}^{n-1} \cap \{ T>t_n\}  ]   } \\                        
                 = & \frac{E [  h(W_n)  \exp( -\int_{t_n}^t \phi_s (W_1, \dots,
                         W_{n-1}, 0) + \phi_s (0, \dots,
                         0, W_{n})  ds   )   
                        | \mathcal F_{t_n}^{n-1}\cap \{ T>t_n\}]     }{E [  \exp(- \int_{t_n}^t 
                        \phi_s (W_1, \dots,
                         W_{n-1}, 0) + \phi_s (0, \dots,
                         0, W_{n})  
                        ds   )   
                        | \mathcal F_{t_n}^{n-1} \cap \{ T>t_n\}  ]   } \\
                   = & \frac{E [  h(W_n)  \exp(-\int_{t_n}^t\phi_s (0, \dots,
                         0, W_{n})  ds   )   
                        | \mathcal F_{t_n}^{n-1} \cap \{ T>t_n\}  ]     }{E [ -\exp( \int_{t_n}^t 
                        \phi_s (0, \dots,
                         0, W_{n})  
                        ds   )   
                        | \mathcal F_{t_n}^{n-1} \cap \{ T>t_n\}  ]   } \\
                  = & \frac{E [  h(W_n)  \exp( -\int_{t_n}^t\phi_s (0, \dots,
                         0, W_{n})  ds   )   
                        | T > t_n]     }{E [  \exp( -\int_{t_n}^t 
                        \phi_s (0, \dots,
                         0, W_{n})  
                        ds   )   
                        | T > t_n]   }\\
                   =   & E [  h(W_n) | T>t ].
        \end{align*}
        Suppose that the independence claim is true for $ n, n-1, \dots, j+1$. 
        The innovation theorem implies that \eqref{margHaz} gives the hazard with
        respect to $\mathcal F_t^{j}$.

        Furthermore, we have that
        \begin{align*}
        & E [  h(W_j) | \mathcal F_t^{j-1} \cap \{ T>t\}] \\
                = & \frac{E [  h(W_j)  \exp( -\int_{t_j}^t \phi_s (W_1, \dots,
                        W_j, E[W_{j+1} | T >s], \dots, E[W_{n} | T >s] ) ds   )   
                        | \mathcal F_{t_j}^{j-1}\cap \{ T>t_j\}]     }{E [  \exp(- \int_{t_n}^t \phi_s (W_1, \dots,
                         W_j, E[W_{j+1} | T >s], \dots, E[W_{n} | T >s] ) ds   )   
                        | \mathcal F_{t_j}^{j-1} \cap \{ T>t_j\}]   } \\
                 = & \frac{E [  h(W_j)  \exp( -\int_{t_j}^t \phi_s (0, \dots, 0,
                         W_j, 0 , \dots, 0)  ds   )   
                        | \mathcal F_{t_j}^{j-1}\cap \{ T>t_j\}]     }{E [  \exp(- \int_{t_j}^t 
                         \phi_s (0, \dots, 0,
                         W_j, 0 , \dots, 0) 
                        ds   )   
                        | \mathcal F_{t_j}^{j-1}\cap \{ T>t_j\}]   } \\
                 = & \frac{E [  h(W_j)  \exp( -\int_{t_j}^t \phi_s (0, \dots, 0,
                         W_j, 0 , \dots, 0)  ds   )   
                        | T > t_j]     }{E [  \exp(-\int_{t_j}^t 
                         \phi_s (0, \dots, 0,
                         W_j, 0 , \dots, 0) 
                        ds   )   
                        | T > t_j]   } \\
                 = & E [  h(W_j) | T>t] 
                          \end{align*}
So the independence claim is also true for $j$. The result therefore follows by
induction. 
\end{proof}            

                \begin{lem}
                        \begin{align} \label{extCause}
                                E_{\tilde P} [h( W_i) |
                                T > t] = 
                                          E_{P} [h( W_i + V_i) | T > t]
              \end{align}
                        for every $i$ such that $t  > t_i$. 
                       \end{lem}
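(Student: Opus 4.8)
The plan is to reduce \eqref{extCause} to the statement that the survival-conditional law of a structural residual is left invariant by the intervention, and then to read off that invariance from Lemma \ref{OddsIndependence2}. Throughout, write $W_k := X_k - b_k(X_1, \dots, X_{k-1})$ for the residual with respect to the pre-intervention functions and $\tilde W_k := X_k - (b_k + V_k)(X_1, \dots, X_{k-1})$ for the residual with respect to the intervened functions. Since each $V_k$ is a constant, these obey the pointwise identity $W_k = \tilde W_k + V_k$. Under $P$ it is $W_k$ that serves as the structural noise, whereas under $\tilde P$ — which by the definition of causal validity satisfies \eqref{eq:ce} and \eqref{eq:ci} with every $b_k$ replaced by $b_k + V_k$ — it is $\tilde W_k$ that is centered and conditionally independent of the past given survival.

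First I would rewrite the left-hand side of \eqref{extCause}. Using $W_i = \tilde W_i + V_i$ and setting $g(\cdot) := h(\cdot + V_i)$, we get
\begin{align*}
E_{\tilde P}[h(W_i) \mid T > t] = E_{\tilde P}[g(\tilde W_i) \mid T > t],
\end{align*}
so, since $h$ is an arbitrary continuous function, \eqref{extCause} is equivalent to the assertion that $\tilde W_i$ under $\tilde P$ has the same conditional law given $\{T > t\}$ as $W_i$ has under $P$, for every continuous $g$.

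Next I would invoke Lemma \ref{OddsIndependence2}, which is stated to hold with respect to both $P$ and $\tilde P$. Its proof furnishes the explicit representation
\begin{align*}
E[g(W_i) \mid T > t] = \frac{E\big[\, g(W_i)\, \exp\big( -\int_{t_i}^t \phi_s(0, \dots, 0, W_i, 0, \dots, 0)\, \mathrm{d}s \big) \mid T > t_i \,\big]}{E\big[\, \exp\big( -\int_{t_i}^t \phi_s(0, \dots, 0, W_i, 0, \dots, 0)\, \mathrm{d}s \big) \mid T > t_i \,\big]},
\end{align*}
together with the analogous identity under $\tilde P$, with $W_i$ and $E$ replaced by $\tilde W_i$ and $E_{\tilde P}$. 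The crucial observation is that the only stochastic ingredient of this ratio is the single-coordinate contribution $\phi_s(0, \dots, 0, \,\cdot\,, 0, \dots, 0)$ of the residual to the hazard, and that this function is identical under $P$ and $\tilde P$: passing from $b_k$ to $b_k + V_k$ shifts only the deterministic intercept $b_0 \mapsto b_0 + V$, which is absorbed into the covariate-free baseline $\beta_s^0$ and cancels between numerator and denominator. Hence the two ratios depend on the two measures only through the conditional law of the residual given the earlier survival event $\{T > t_i\}$; matching these reduced laws and substituting back $g(\cdot) = h(\cdot + V_i)$ then yields \eqref{extCause}.

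The main obstacle is exactly this last matching: one must justify that survival selection acts identically on $\tilde W_i$ under $\tilde P$ and on $W_i$ under $P$. The representation above reduces the conditional law at level $t$ to the conditional law at the earlier threshold $t_i$, so I expect the clean route is an induction over the ordered index set, with the collapsibility of the additive model doing the work at each step: because $\phi_s$ is affine in the residuals it splits additively across coordinates, the intervention perturbs only deterministic baselines, and the conditional independence of the residuals given survival (supplied by \eqref{eq:ci} for both measures) lets one peel off one coordinate at a time. Once Lemma \ref{OddsIndependence2} is in hand, the remaining manipulations — the change of variables $g = h(\cdot + V_i)$ and the cancellation of the deterministic factor in the normalisation — are routine.
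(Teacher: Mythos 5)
Your reduction and the cancellation mechanics are sound, and your organization is in one respect leaner than the paper's. The paper proves \eqref{extCause} by backward induction on $i$ (base case $i=n$, then descending), and it needs the induction hypothesis precisely to identify $E_{\tilde P}[W_k \mid T>s] = E_P[W_k \mid T>s] + V_k$ for $k>i$, so that after splitting off the deterministic term $\phi_s(0,\dots,0,V_i,\dots,V_n)$ the remaining weight is recognizable as the correct $P$-hazard with respect to $\mathcal F^i_s$. In your ratio representation those projected terms are deterministic and cancel between numerator and denominator regardless of their values, so no induction over $i$ is required in this lemma at all; the only induction is the one already inside Lemma \ref{OddsIndependence2}, which you legitimately invoke under $P$ and under $\tilde P$ separately. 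Up to the final matching step, your argument is correct and is essentially the paper's mechanism (survival-weight ratio, additive splitting of $\phi_s$ across coordinates, conditional independence from Lemma \ref{OddsIndependence2}) repackaged without the outer induction.

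However, there is a genuine gap, and it sits exactly at what you call the main obstacle: the identity $\mathcal{L}_{\tilde P}(\tilde W_i \mid T > t_i) = \mathcal{L}_{P}(W_i \mid T > t_i)$ is never established, and the route you sketch (induction, collapsibility, peeling coordinates via \eqref{eq:ci}) cannot establish it. The causal-validity conditions \eqref{eq:ce} and \eqref{eq:ci} under $\tilde P$ constrain only the conditional \emph{mean} and the conditional \emph{independence} of the intervened residual $\tilde W_i$ given survival; they say nothing about its full conditional law. For linear $h$ your argument together with \eqref{eq:ce} would suffice, but the lemma is asserted for arbitrary continuous $h$, and there equality of the entire conditional distributions is extra input that no amount of survival-selection analysis can generate from the stated assumptions. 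That input is the distributional content of the intervention itself: the perturbation shifts the structural function by the constant $V_i$ while leaving the conditional-on-survival law of the disturbance at its realization time otherwise unchanged. The paper uses this tacitly in each of its measure-change lines, e.g.\ in passing from $E_{\tilde P}\big[h(W_n)\exp\big(-\int_{t_n}^t \phi_s(W_1,\dots,W_n)\,ds\big) \mid \mathcal F^{n-1}_{t_n}\cap\{T>t_n\}\big]$ to $E_{P}\big[h(W_n+V_n)\exp\big(-\int_{t_n}^t \phi_s(W_1,\dots,W_{n-1},W_n+V_n)\,ds\big) \mid \mathcal F^{n-1}_{t_n}\cap\{T>t_n\}\big]$, and at the corresponding step for every $j$ in its induction. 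So the fix is not to prove the matching but to invoke it as part of what ``specifying an intervention'' means (ideally flagging that this strengthens \eqref{eq:ce}--\eqref{eq:ci}); once stated, your comparison of the two single-coordinate ratios completes the proof immediately. As written, the proposal stops exactly where the real content of the lemma lies.
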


Note that for convenience of notation in the following proof we introduce the following notation $W_k(s):=E_{P}[W_k | T>s]$  and  $\tilde{W}_k(s):=E_{\tilde{P}}[W_k | T>s]$    
                       
                        \begin{proof}
                             We have that 
                                \begin{align*} & E_{\tilde P} [h(W_n) | T > t ]
                                        \\ 
                                         = & E_{\tilde P} [h(W_n) |\mathcal
                                         F_{t-}^{n-1} \cap \{ T>t\}]  \\ = & 
                                         \frac{  E_{\tilde P} [h(W_n)  \exp
                                                 \big( 
                                                 -\int_{t_n}^t \phi_s(W_1, \dots, W_n) ds 
                                               \big)
                                               |\mathcal
                                         F_{t_n}^{n-1}\cap \{ T>t_n\} ]     }
                                 {   E_{\tilde P} [ \exp
                                                 \big( 
                                                  -\int_{t_n}^t \phi_s(W_1, \dots, W_n) ds 
                                                                                                     \big)             |\mathcal
                                         F_{t_n}^{n-1} \cap \{ T>t_n\}  ]}\\ = & 
                                         \frac{  E_{ P} [h(W_n + V_n)  \exp
                                                 \big( -\int_{t_n}^t \phi_s(W_1, \dots, W_{n-1}, W_n+ V_n) ds 
                                               \big)             |\mathcal
                                         F_{t_n}^{n-1}  \cap \{ T>t_n\} ]    }
                                 {   E_{ P} [ \exp
                                                 \big( -\int_{t_n}^t \phi_s(W_1, \dots, W_{n-1}, W_n+ V_n) ds 
                                               \big)             |\mathcal
                                         F_{t_n}^{n-1} \cap \{ T>t_n\} ]}
                                 \\ = & 
                                         \frac{  E_{ P} [h(W_n + V_n)  \exp
                                                 \big(   -\int_{t_n}^t \phi_s(W_1, \dots, W_n) ds 
                                               \big)             |\mathcal
                                         F_{t_n}^{n-1} \cap \{ T>t_n\} ]}   
                                 {   E_{ P} [ \exp
                                                 \big(  -\int_{t_n}^t \phi_s(W_1, \dots, W_n) ds 
                                               \big)             |\mathcal
                                         F_{t_n}^{n-1} \cap \{ T>t_n\} ] } \\ = &  
                                 E_{P} [h(W_n + V_n) |\mathcal
                                         F_{t-}^{n-1} \cap \{ T>t\}   ]
                                         \\ = & E_P[ h(W_n + V_n) | T>t]
                                         .   
                                \end{align*}
                           
Suppose that \eqref{extCause} is satisfied for $i=n, n-1,\dots, j+1$. 
Now 
\begin{align*} & E_{\tilde P}[h( W_j)| T > t ] \\ = 
 & E_{\tilde P} [h(W_j) |\mathcal
                                         F_{t-}^{j-1} \cap \{T>t\} ]  \\ =                                          
                                        & 
                                         \frac{  E_{\tilde P} [h(W_j)  \exp
                                                 \big( 
                                                 -\int_{t_j}^t \phi_s(W_1, \dots, W_n) ds 
                                               \big)
                                               |\mathcal
                                         F_{t_j}^{j-1} \cap \{T>t_j\}]    }
                                 {   E_{\tilde P} [ \exp
                                                 \big( 
                                                  -\int_{t_j}^t \phi_s(W_1, \dots, W_n) ds 
                                                                                                     \big)             |\mathcal
                                         F_{t_j}^{j-1} \cap \{T>t_j\} ] }\\ = &
                                  \frac{  E_{\tilde P} [h(W_j)  \exp
                                                 \big( 
                                                 -\int_{t_j}^t \phi_s(W_1,
                                                 \dots, W_j, 
                                               \tilde{ W}_{j+1}(s), \dots,  \tilde{ W}_{n}(s)
                                                 ) ds 
                                               \big)
                                               |\mathcal F_{t_j}^{j-1}\cap \{T>t_j\}] }
                                 {   E_{\tilde P} [ \exp
                                                 \big( 
                                                  -\int_{t_j}^t \phi_s(W_1,
                                                  \dots, W_j, 
                                                  \tilde{ W}_{j+1}(s), \dots,  \tilde{ W}_{n}(s)
) ds 
                                                                                                     \big)             |\mathcal
                                         F_{t_j}^{j-1} \cap \{T>t_j\}]}\\ 
                                 = &
                                  \frac{  E_{ \tilde P} [h(W_j)  \exp
                                                 \big( 
                                                 -\int_{t_j}^t \phi_s(W_1,
                                                 \dots, W_j, 
                                                   W_{j+1}(s)+ V_{j+1}, \dots,   W_{n}(s)+ V_n
                                                 ) ds 
                                               \big)
                                               |\mathcal F_{t_j}^{j-1}\cap \{T>t_j\}]}
                                 {   E_{\tilde P} [ \exp
                                                 \big( 
                                                  -\int_{t_j}^t \phi_s(W_1,
                                                  \dots, W_j, 
                                                    W_{j+1}(s)+ V_{j+1}, \dots,   W_{n}(s)+ V_n
) ds 
                                                                                                     \big)             |\mathcal
                                         F_{t_j}^{j-1}  \cap \{T>t_j\} ]] } \\ 
                                 = &
                                  \frac{  E_{  P} [h(W_j + V_j )  \exp
                                                 \big( 
                                                 -\int_{t_j}^t \phi_s(W_1,
                                                 \dots, W_{j-1}, W_j + V_j, 
                                                 W_{j+1}(s)+ V_{j+1}, \dots,   W_{n}(s)+ V_n
                                                 ) ds 
                                               \big)
                                               |\mathcal F_{t_j}^{j-1} \cap \{T>t_j\} ] }
                                 {   E_{P} [ \exp
                                                 \big( 
                                                  -\int_{t_j}^t \phi_s(W_1,
                                                  \dots,W_{j-1} , W_j + V_j , 
                                                   W_{j+1}(s)+ V_{j+1}, \dots,   W_{n}(s)+ V_n
) ds 
                                                                                                     \big)             |\mathcal
                                         F_{t_j}^{j-1} \cap \{T>t_j\} ] }
                                 \\ =
                                  & \frac{  E_{P} [h(W_j + V_j)  \exp
                                                 \big( 
                                                 -\int_{t_j}^t \phi_s(W_1,
                                                 \dots, W_j,   W_{j+1}(s), \dots,  W_{n}(s))  + \phi_s(0,
                                                  \dots, 0, V_{j}, \dots,
                                                  V_n) ds 
                                               \big)
                                               |\mathcal
                                         F_{t_j}^{j-1}  \cap \{T>t_j\}]     }
                                 {   E_{ P} [ \exp
                                                 \big( 
                                                  -\int_{t_j}^t \phi_s(W_1,
                                                  \dots,W_{j},  W_{j+1}(s), \dots,   W_{n}(s)) + \phi_s(0,
                                                  \dots, 0, V_{j}, \dots,
                                                  V_n) ds 
                                                                                                     \big)             |\mathcal
                                         F_{t_j}^{j-1} \cap \{T>t_j\} ] }
                                 \\  = & \frac{  E_{P} [h(W_j + V_j)  \exp
                                                 \big( 
                                                 -\int_{t_j}^t \phi_s(W_1,
                                                 \dots, W_j ,  W_{j+1}(s), \dots,   W_{n}(s))) ds 
                                               \big)
                                               |\mathcal
                                         F_{t_j}^{j-1} \cap \{T>t_j\} ]]    }
                                 {   E_{ P} [ \exp
                                                 \big( 
                                                  -\int_{t_j}^t \phi_s(W_1,
                                                  \dots, W_j ,  W_{j+1}(s), \dots,   W_{n}(s) ) ds 
                                                                                                     \big)             |\mathcal
                                         F_{t_j}^{j-1}\cap \{T>t_j\} ] } \\ =
                                 & E_P [ h(W_j + V_j) |  \mathcal F_{t}^{j-1} \cap \{T>t\}  ] \\
                                  = &
                                  E_P [ h(W_j + V_j) |  T > t ],  
                         \end{align*}
                         so \eqref{extCause} follows by induction.
                        \end{proof}

 Let $\alpha_t^0$ denote the hazard of $T$ with respect to 
                       $ N_t$ and $P$, and let $\tilde \alpha_t^0$ 
                       denote the hazard we would see  if the frequencies had
                       been governed by $\tilde P$. 
                
                       Now,
                \begin{align*}
 & E_{\tilde P} [ \beta^\intercal_t J_t X | T >t] = \phi_t( E_{\tilde P}[
          W_1 |   T > t], \dots,  E_{\tilde P}[
          W_n |   T > t]
     ) \\ = &  \phi_t( E_{ P}[
          W_1 |   T > t]+ V_1, \dots,  E_{P}[
          W_n |   T > t] + V_n
     ) 
     \\  = &  \phi_t( E_{ P}[
          W_1 |   T > t], \dots,  E_{P}[
          W_n |   T > t] 
     )   +  \phi_t( V_1, \dots, V_n)
                \end{align*}
            
                By the Innovation Theorem, we have the following: 
             \begin{align} \label{effect2}
                \tilde \alpha_t^0 = \alpha_t^0 +  \beta_t ^\intercal J_t
                 ( I + \nabla_i)  ( I + \nabla_{i-1}) \dots ( I + \nabla_2)  V. 
        \end{align}
Finally, \eqref{eq:mainEffect} follows by writing out the matrix products. 
        \end{proof}

        \begin{proof}[Proof of Proposition \ref{prop:constant}] 
        \begin{align*}
              E\big[ X_{k} \big|  \mathcal F_t^{k-1}  \big]  = 
              & E \big[ \sum_{j = 1}^{k-1} b_{k,j} X_j   +   W_k \big|  \mathcal F_t^{k-1}  \big]  
                = \sum_{j = 1}^{k-1} b_{k,j} X_j   +  
            E \big[ W_k \big|  \mathcal F_t^{k-1}  \big]  
            \\ =  
              & 
             \sum_{j = 1}^{k-1} b_{k,j} X_j    +  
            E \big[ W_k \big|  T > t  \big], 
        \end{align*}
        where the last equality follows since $W_k  W_1, \dots, W_{k-1} |
        T > t$
        by Lemma \ref{OddsIndependence2}.
    \end{proof}
\end{appendix}

\newpage
\addcontentsline{toc}{section}{References}
\bibliographystyle{apalike}
\bibliography {SIM_DPA_FEB2015}{}
\end{document}